\documentclass[11pt,reqno]{amsart}
\usepackage[T1]{fontenc}
\usepackage{amsmath,amssymb,amsthm,mathtools,mathrsfs}
\usepackage{newtxtext,newtxmath}
\usepackage{microtype,booktabs,array,enumitem}
\usepackage{xcolor}
\usepackage[colorlinks=true,linkcolor=black,citecolor=black,urlcolor=blue!45!black]{hyperref}
\usepackage[margin=1.00in]{geometry}
\hypersetup{
 pdftitle={The Lattice of Coalition Logics between MCL and CL: Coatoms and Compact Meets},
 pdfauthor={Shanxia Wang and Yuqiang Dou},
 pdfsubject={The MCL-CL lattice of intermediate coalition logics: coatoms, relative covers, and compact meets},
 pdfkeywords={lattice of coalition logics, intermediate coalition logics, extension lattice, coatoms, compact meets, critical coalition algebras}
}
\allowdisplaybreaks[2]
\newtheorem{theorem}{Theorem}[section]
\newtheorem{lemma}[theorem]{Lemma}
\newtheorem{proposition}[theorem]{Proposition}
\newtheorem{corollary}[theorem]{Corollary}

\theoremstyle{definition}
\newtheorem{definition}[theorem]{Definition}
\newtheorem{construction}[theorem]{Construction}
\theoremstyle{remark}
\newtheorem{remark}[theorem]{Remark}
\newtheorem{problem}[theorem]{Problem}
\DeclareMathOperator{\Cn}{Cn}
\DeclareMathOperator{\Th}{Th}
\DeclareMathOperator{\Var}{Var}
\DeclareMathOperator{\Coat}{Coat}
\DeclareMathOperator{\Log}{Log}
\newcommand{\MCL}{\ensuremath{\mathsf{MCL}}}
\newcommand{\CL}{\ensuremath{\mathsf{CL}}}
\newcommand{\Ser}{\ensuremath{\mathsf{Ser}}}
\newcommand{\IA}{\ensuremath{\mathsf{IA}}}
\newcommand{\Det}{\ensuremath{\mathsf{Det}}}

\newcommand{\Vmcl}{\mathcal V_{\MCL}}
\newcommand{\Vcl}{\mathcal V_{\CL}}
\newcommand{\Fm}{\mathrm{Fm}}
\newcommand{\setP}{\mathcal P}
\providecommand{\llbracket}{\mathopen{[\![}}
\providecommand{\rrbracket}{\mathclose{]\!]}}
\newcommand{\lattice}{\mathscr L_N}
\DeclareMathOperator{\Hist}{Hist}
\DeclareMathOperator{\last}{last}
\title[Lattice of coalition logics]{The Lattice of Coalition Logics between MCL and CL:\\ Coatoms and Compact Meets}
\author{Shanxia Wang}
\address{School of Computer and Information Engineering (School of Artificial Intelligence),
Henan Normal University, Xinxiang, Henan, China}
\email{wangshanxia@htu.edu.cn}
\author{Yuqiang Dou}
\address{School of Computer and Information Engineering (School of Artificial Intelligence),
Henan Normal University, Xinxiang, Henan, China}
\thanks{Both authors: School of Computer and Information Engineering
(School of Artificial Intelligence), Henan Normal University,
Xinxiang, Henan, China. Corresponding author: Shanxia Wang
(\texttt{wangshanxia@htu.edu.cn}).}
\date{October 2026}
\subjclass[2020]{03B45, 06D20, 06E15, 03G05}
\begin{document}

\begin{abstract}
	For a fixed finite set $N$ of at least two agents, we investigate the
	lattice $\lattice$ of coalition logics between minimal coalition logic
	(\MCL) and classical coalition logic (\CL), with extensions required to
	be structural. This lattice is order-dual to an interval in the
	subvariety lattice of Boolean coalition algebras. The eight basic
	extensions obtained by imposing seriality, independence of agents,
	and determinism form a Boolean cube under inclusion, but not a
	sublattice of $\lattice$. Our main results concern coatoms and compact meets.
	First, finite critical coalition algebras, realized as complex
	algebras of finite general concurrent games, yield three infinite
	families of pairwise distinct coatoms containing seriality,
	distinguished by failures of independence, determinism, or both. An
	explicit splitting characterizes the unique coatom omitting
	seriality. Second, the meet of two basic extensions is compact if and
	only if the extensions are comparable. For incomparable pairs, no
	axiomatization of the meet can use formulas of uniformly bounded
	modal depth, even with infinitely many axioms. Consequently,
	$\lattice$ and three natural principal intervals are non-coherent
	algebraic lattices. These results reveal lattice-theoretic phenomena
	beyond the inclusion order of the eight basic extensions.
\end{abstract}

\maketitle
\smallskip
\noindent{\small\textbf{Keywords.} Lattice of coalition logics;
intermediate coalition logics; extension lattice; coatoms; compact meets;
finite critical coalition algebras; relative variety covers;
finite axiomatizability.}\par
\smallskip

\section{Introduction}

Coalition logic formalizes the ability of groups of agents to enforce
outcomes through their joint choices in games. Pauly's classical
coalition logic (\CL) provides a foundational account of such
abilities \cite{Pauly2002}. Minimal coalition logic (\MCL), introduced
by Li and Ju, allows general concurrent games without imposing
seriality of available actions, independence of agents' choices, or
determinism of complete action profiles \cite{LiJu2025}. Imposing
selected combinations of these conditions yields eight basic logics,
for which complete axiomatizations are available \cite{LiJu2026};
equivalent semantic presentations have also been investigated
\cite{ChenJu2026}. These developments motivate a complementary
question: how are the resulting logics situated within the lattice of
structural extensions? Here an extension is called \emph{structural}
when it preserves all the inference rules of \MCL---modus ponens (MP),
uniform substitution (US), and conditional necessitation (CN)---
rather than merely being closed under uniform substitution.

Fix a finite set $N$ of agents with $|N|\ge2$ and consider
\[
\lattice
=\{L:\MCL\subseteq L\subseteq\CL,\
L\text{ is closed under MP, US, and CN}\},
\]
ordered by inclusion of theorem sets. We call this the \emph{lattice
	of intermediate coalition logics}, or the \emph{lattice of coalition
	logics between \MCL\ and \CL}. As shown below, $\lattice$ is a
complete algebraic lattice. The specified bounds matter: our object
is the interval of structural extensions between these two systems,
not the collection of all logics that could be called coalition
logics. The study of extension lattices has been fruitful in normal
modal logic \cite{Blok1980,Kracht1999}; here it leads to questions
about maximal proper extensions, relative covers, compact elements,
and lattice operations in a strategic setting.

The distinction between the basic systems and their ambient lattice
is already visible at the level of meets. The extensions $B_\Sigma$
obtained by adding any subset $\Sigma\subseteq\{\Ser,\IA,\Det\}$
form a Boolean cube \emph{as an ordered set}, but not as a sublattice
of $\lattice$. In particular,
\[
\MCL\subsetneq
B_{\{\Ser\}}\cap B_{\{\IA\}},
\]
as established in Proposition~\ref{prop:non-sublattice}. Thus
intersecting the sets of imposed strategic assumptions does not, in
general, compute the meet of the corresponding logics. The
surrounding extension lattice has structure that cannot be recovered
from the eight-element order diagram alone.

Our first group of results concerns the \emph{coatoms} of $\lattice$,
that is, its maximal proper elements below \CL. We establish an
order-reversing correspondence with the interval of varieties
$[\Vcl,\Vmcl]$ of Boolean coalition algebras, under which coatoms
correspond to covers of $\Vcl$
(Proposition~\ref{prop:duality}). Using finite critical coalition
algebras realized as complex algebras of explicit general concurrent
games, we obtain three infinite families of pairwise distinct coatoms
containing seriality. These families distinguish failures of
independence, determinism, and both conditions
(Theorem~\ref{thm:families}). In contrast, the nonserial boundary is
rigid: there is a unique coatom $C_{\mathrm{ser}}$ omitting seriality,
and it satisfies
\[
L\subseteq C_{\mathrm{ser}}
\quad\Longleftrightarrow\quad
[\varnothing]\top\notin L
\qquad(L\in\lattice).
\]
This splitting characterization is proved in
Theorem~\ref{thm:serial-splitting}.

Our second group of results concerns \emph{compact meets}. In
$\lattice$, compactness is equivalent to finite axiomatizability
as a structural extension of \MCL{}
(Proposition~\ref{prop:duality}). For the eight basic extensions,
we prove the exact criterion
\[
B_\Sigma\cap B_\Theta\text{ is compact in }\lattice
\quad\Longleftrightarrow\quad
\Sigma\subseteq\Theta\ \text{or}\ \Theta\subseteq\Sigma
\]
(Theorem~\ref{thm:compact-meet}). If the two extensions are
incomparable, their meet cannot be axiomatized even by an infinite
set of formulas whose modal depths have a common finite bound
(Corollary~\ref{cor:unbounded-depth}). Consequently, $\lattice$
and the principal intervals $[S,\CL]$, $[I,\CL]$, and $[D,\CL]$
are non-coherent algebraic lattices
(Corollary~\ref{cor:noncoherent}). The negative results are proved
by separating two distinct local strategic defects through
arbitrarily long sequences of classical play.

The techniques draw on established work in modal extension theory
and universal algebra. In particular, the variety-theoretic
arguments use congruence distributivity and J\'onsson's lemma
\cite{Jonsson1967,Blok1980,Kracht1999}; the cyclic Boolean core of
our finite witnesses has precedents in the theory of rotational
lattices \cite{CzedliNagy2013}. Full-action-history unfoldings for
general concurrent games have also been studied
\cite{ChenJu2026}. The contribution here is the coalition-specific
realization of finite relative covers, the explicit splitting of the
nonserial boundary, and the exact compact-meet criterion for the
eight basic extensions. The algebraic correspondence covers
structural intermediate logics without assuming that each such logic
is complete with respect to a class of general concurrent game
frames; likewise, finite critical witnesses do not by themselves
establish finite axiomatizability of the corresponding coatoms.

The remainder of the paper introduces the language and game
semantics (Section~\ref{sec:sem}) and the algebraic representation
of $\lattice$ (Section~\ref{sec:algebra}). It then constructs
serial coatoms from finite critical algebras
(Section~\ref{sec:finite-coatoms}) and establishes the nonserial
splitting result (Section~\ref{sec:splitting}). Compact meets are
studied in Section~\ref{sec:meets}, followed by further structural
questions about the lattice of coalition logics
(Section~\ref{sec:scope}).

\section{The fixed language and game semantics}\label{sec:sem}

Fix throughout a finite set $N$ of agents with $|N|\ge2$, and a
countably infinite set of propositional variables. The language is
formed from propositional variables by Boolean connectives and the
modalities $[C]$, one for each coalition $C\subseteq N$. The formula
$[C]\varphi$ asserts that $C$ has an \emph{available joint action}
that guarantees $\varphi$. Thus $[C]$ expresses existential strategic
ability, rather than an ordinary universal box. For a singleton
coalition we write $[a]$ instead of $[\{a\}]$.

\begin{definition}\label{def:systems}
	The system \MCL\ contains all propositional tautologies and the
	following axiom schemes, where $C,D\subseteq N$:
	\begin{align}
		[\varnothing](\varphi\to\psi)&\to([C]\varphi\to[C]\psi)
		&&\textup{(MG)},\label{eq:MG}\\
		[C]\varphi&\to[D]\varphi
		&&(C\subseteq D)\quad\textup{(MC)},\label{eq:MC}\\
		\neg[C]\bot&&&\textup{(Live)}.\label{eq:Live}
	\end{align}
	Its inference rules are modus ponens (MP), uniform substitution (US),
	and conditional necessitation
	\begin{equation}\label{eq:CN}
		\frac{\varphi}{[C]\psi\to[\varnothing]\varphi}
		\qquad\textup{(CN)}.
	\end{equation}
	Classical coalition logic \CL\ is obtained from \MCL\ by adding
	\begin{align}
		[C]\top&&&\textup{(Ser)},\label{eq:Ser}\\
		([C]\varphi\wedge[D]\psi)&\to[C\cup D](\varphi\wedge\psi)
		&&(C\cap D=\varnothing)\quad\textup{(IA)},\label{eq:IA}\\
		[C](\varphi\vee\psi)&\to([C]\varphi\vee[N]\psi)
		&&\textup{(Det)}.\label{eq:Det}
	\end{align}
	For a set of formulas $\Gamma$, let $\Cn(\Gamma)$ denote the least
	extension of \MCL\ containing $\Gamma$ and closed under MP, US, and
	CN. Members of $\Gamma$ are added as axioms, and CN applies to
	theorems of the resulting extension. In particular, $\Cn$ describes
	closure of theorem sets, not a local consequence relation permitting
	CN on undischarged assumptions.
\end{definition}

We use the full-profile-first presentation of general concurrent
games \cite{LiJu2025}. A frame is a structure
$F=(W,U,(O_s)_{s\in W})$, where $W$ is a nonempty set of states,
$U$ is a nonempty common action alphabet, and
$O_s:U^N\to\setP(W)$ assigns an outcome set to every full action
profile $\pi\in U^N$ at $s$. The profile $\pi$ is \emph{available}
at $s$ precisely when $O_s(\pi)\ne\varnothing$. For a coalition
$C\subseteq N$ and a partial profile $\sigma\in U^C$, put
\[
O_s^C(\sigma)
=\bigcup\{O_s(\pi):\pi\in U^N,\ \pi|_C=\sigma\}.
\]
A partial profile is available iff $O_s^C(\sigma)\ne\varnothing$.
This includes the empty coalition, whose unique partial profile is
the empty function.

A model $\mathcal M=(F,V)$ adds a valuation $V$ of propositional
variables by subsets of $W$. Writing
$\llbracket\varphi\rrbracket_{\mathcal M}
=\{t\in W:\mathcal M,t\models\varphi\}$, the ability clause is
\begin{equation}\label{eq:semantics}
	\mathcal M,s\models[C]\varphi
	\quad\Longleftrightarrow\quad
	\exists\sigma\in U^C\,\bigl(
	O_s^C(\sigma)\ne\varnothing\ \text{and}\
	O_s^C(\sigma)\subseteq\llbracket\varphi\rrbracket_{\mathcal M}
	\bigr).
\end{equation}
We write $F\models\varphi$ if $\varphi$ holds at every state
under every valuation on $F$, and $F\models\Gamma$ if this holds
for every $\varphi\in\Gamma$.

The three additional conditions on general concurrent game frames
are imposed at every state $s$:
\begin{enumerate}[label=\textup{(\roman*)},leftmargin=2.5em,itemsep=2pt]
	\item \emph{Seriality:} $O_s(\pi)\ne\varnothing$ for some
	$\pi\in U^N$.
	\item \emph{Independence of agents:} whenever $C\cap D=\varnothing$
	and $\sigma\in U^C$, $\tau\in U^D$ are available,
	their union $\sigma\cup\tau$ is available for $C\cup D$;
	equivalently,
	$O_s^{C\cup D}(\sigma\cup\tau)\ne\varnothing$.
	\item \emph{Determinism:} each nonempty $O_s(\pi)$ is a singleton.
\end{enumerate}
These conditions validate, respectively, the Ser, IA, and Det axiom
schemes. We rely on the established soundness and completeness
results for the relevant classes of games
\cite{Pauly2002,LiJu2025,LiJu2026}. Importantly, validity of the IA
or Det scheme on an individual frame need not imply that the
corresponding action-level condition holds on that frame
\cite{LiJu2026}. Accordingly, whenever we claim that a scheme fails
on a constructed frame, we give a refuting formula instance and
valuation, rather than inferring invalidity from the failure of an
action condition alone.

Define the successor relation $R$ of $F$ by
\[
sRt\quad\Longleftrightarrow\quad
t\in O_s(\pi)\text{ for some }\pi\in U^N.
\]
The subframe generated by $s$ consists of all states reachable from
$s$ by finitely many $R$-steps, with outcome functions restricted
to those states. Since every outcome of a retained state is also
retained, restricting a model to a generated subframe preserves the
truth of every formula at its states, by induction on formulas. We
call a local game \emph{classical} if it satisfies all three
action-level conditions above.

The following derived modality, also considered in
\cite{LiJu2025}, will be central to our analysis:
\begin{equation}\label{eq:derived-box}
	e:=[\varnothing]\top,\qquad
	\Box\varphi:=e\to[\varnothing]\varphi.
\end{equation}
At a state $s$, $e$ holds exactly when at least one full profile
is available, or equivalently when $R(s)\ne\varnothing$.
The semantics of the empty coalition therefore yields
\begin{equation}\label{eq:boxsem}
	\mathcal M,s\models\Box\varphi
	\quad\Longleftrightarrow\quad
	\forall t\,(sRt\Rightarrow\mathcal M,t\models\varphi).
\end{equation}
In particular, $\Box\varphi$ is true at a dead-end state. Unlike
the strategic-ability operators $[C]$, the derived $\Box$ is a
normal modality already over \MCL. Indeed, if $\varphi$ is an
\MCL-theorem, CN with $C=\varnothing$ and $\psi=\top$ gives
$e\to[\varnothing]\varphi$, proving the derived necessitation rule
$\varphi/\Box\varphi$. The same rule holds in every structural
extension of \MCL, since CN remains available. Furthermore, MG
with $C=\varnothing$ gives
\[
[\varnothing](\varphi\to\psi)
\to\bigl([\varnothing]\varphi\to[\varnothing]\psi\bigr).
\]
Propositional reasoning, placing the same antecedent $e$ before
each of these three $[\varnothing]$-formulas, yields the K axiom
\[
\Box(\varphi\to\psi)
\to(\Box\varphi\to\Box\psi).
\]
Also, CN applied to the theorem $\top$ yields
$[C]\varphi\to e$, while MC gives $e\to[C]\top$.
Consequently,
\[
\MCL\vdash[C]\varphi\to e,
\qquad
\MCL\vdash[C]\top\leftrightarrow e
\quad(C\subseteq N).
\]
Thus adjoining Ser is equivalent to adjoining the single
variable-free formula $e$. Modal depth $\operatorname{md}(\varphi)$
is measured in the original $[C]$-language; in particular, after
expanding the abbreviation $\Box\psi$, we have
$\operatorname{md}(\Box\psi)=1+\operatorname{md}(\psi)$.

\paragraph{The role of conditional necessitation.}
The requirement of CN closure is equivalent to necessitation for the
normal modality just defined. More precisely, let $L$ contain \MCL\
and be closed under MP and US, without initially assuming CN closure.
Then
\[
  L\text{ is closed under CN}
  \quad\Longleftrightarrow\quad
  \bigl(\varphi\in L\Longrightarrow\Box\varphi\in L\bigr)
  \quad\text{for every formula }\varphi.
\]
For the forward implication, apply CN with $C=\varnothing$ and
$\psi=\top$. Conversely, if $\varphi\in L$ implies
$\Box\varphi=e\to[\varnothing]\varphi\in L$, combine this
with the \MCL-theorem $[C]\psi\to e$ to obtain
$[C]\psi\to[\varnothing]\varphi\in L$ for every $C,\psi$.
Thus CN is the original theorem-level inference rule of \MCL,
expressed as necessitation for the derived box; it is not an
additional axiom and does not impose seriality. In contrast, the
unconditional rule $\varphi/[C]\varphi$ would yield $[C]\top$
from $\top$, excluding dead-end states. Moreover, CN and MG give
\[
  \varphi\to\psi\in L
  \quad\Longrightarrow\quad
  [C]\varphi\to[C]\psi\in L
\]
for every CN-closed $L$. Hence provable equivalence is preserved by
all strategic operators, a fact used in the variety duality of
Proposition~\ref{prop:duality}.

\begin{lemma}[Fixed-frame soundness]\label{lem:frame-soundness}
	For every fixed general concurrent game frame $F$, the set
	$\Log(F)$ of formulas valid on $F$ under all valuations contains
	\MCL\ and is closed under MP, US, and CN. Hence
	\[
	F\models\Gamma\quad\Longrightarrow\quad
	F\models\Cn(\Gamma).
	\]
\end{lemma}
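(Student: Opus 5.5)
The plan is to verify directly from the ability clause \eqref{eq:semantics} that every axiom of \MCL\ is valid on an arbitrary frame $F=(W,U,(O_s)_{s\in W})$. After that we check that each rule of \MCL\ preserves validity on $F$. The displayed implication then follows formally. If $F\models\Gamma$, then $\Log(F)$ is an extension of \MCL\ containing $\Gamma$ and closed under MP, US and CN. Since $\Cn(\Gamma)$ is the least such set, we get $\Cn(\Gamma)\subseteq\Log(F)$.

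For the axioms, propositional tautologies are valid because the Boolean clauses are classical. For (Live), note that $[C]\bot$ would require an available $\sigma$ with $\varnothing\ne O_s^C(\sigma)\subseteq\varnothing$, which is impossible.

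For (MC), let $C\subseteq D$ and let $\sigma\in U^C$ witness $[C]\varphi$ at $s$. Pick any full profile $\pi$ with $\pi|_C=\sigma$ and $O_s(\pi)\neq\varnothing$; one exists because $O_s^C(\sigma)\ne\varnothing$. Put $\tau=\pi|_D$. Then $\varnothing\ne O_s(\pi)\subseteq O_s^D(\tau)$, and since every full profile extending $\tau$ also extends $\sigma$, we have $O_s^D(\tau)\subseteq O_s^C(\sigma)\subseteq\llbracket\varphi\rrbracket$.

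For (MG), suppose $s\models[\varnothing](\varphi\to\psi)$. The only partial profile for $\varnothing$ is the empty function, and $O_s^\varnothing$ of it is $R(s)$. So the hypothesis says that $R(s)\neq\varnothing$ and every successor satisfies $\varphi\to\psi$. Now any witness $\sigma$ for $[C]\varphi$ has $O_s^C(\sigma)\subseteq R(s)$, so its outcomes satisfy $\psi$ as well, and $\sigma$ witnesses $[C]\psi$.

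For the rules, MP is immediate from validity under every valuation at every state. For US, take a substitution $\theta$ and a valuation $V$. Define $V'(p)=\llbracket\theta(p)\rrbracket_{(F,V)}$. By induction on $\chi$ we have $\llbracket\theta(\chi)\rrbracket_{(F,V)}=\llbracket\chi\rrbracket_{(F,V')}$. The modal step is immediate from \eqref{eq:semantics}, since that clause depends only on the truth set of the argument. Hence validity of $\chi$ on $F$ gives validity of $\theta(\chi)$ on $F$.

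For CN, suppose $F\models\varphi$ and $s\models[C]\psi$ under some valuation. The witness $\sigma$ gives $O_s^C(\sigma)\ne\varnothing$, so $R(s)\ne\varnothing$. Every state satisfies $\varphi$, so $R(s)\subseteq\llbracket\varphi\rrbracket$, and the empty profile witnesses $[\varnothing]\varphi$ at $s$.

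The only step needing some care is the US substitution lemma: it requires the \emph{frame}, not a fixed model, to be held constant, which is why the statement is about $\Log(F)$. The MC monotonicity computation is the one place where the interaction of partial profiles matters. Neither step is a real obstacle.
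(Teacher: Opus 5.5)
Your proof is correct and follows essentially the same route as the paper: you verify the \MCL\ axioms directly from the ability clause, show that MP, US (via the substitution lemma with $V'(p)$ the truth set of the substituted formula) and CN preserve validity on the fixed frame, and conclude by minimality of $\Cn(\Gamma)$. Your verifications of MC and MG are somewhat more explicit than the paper's one-line justifications, but the argument is the same.
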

\begin{proof}
	The \MCL\ axioms are valid under the ability semantics:
	nonempty outcomes give Live, extension of an available partial
	profile gives MC, and the truth of the empty-coalition premise
	in MG guarantees the relevant implication at every available
	outcome. MP preserves validity. For US, assign to each variable
	the truth set of the formula substituted for it; the substitution
	lemma, proved by induction on formulas, gives validity of every
	substitution instance.
	
	For CN, suppose $F\models\varphi$ and fix an arbitrary valuation
	and state $s$. If $[C]\psi$ is false at $s$, the CN conclusion is
	true. Otherwise, some available $C$-profile has a nonempty outcome
	set and hence has an available full-profile extension. Thus the
	unique empty-coalition profile is available at $s$. Since
	$\varphi$ is valid at every state under every valuation on $F$,
	all outcomes of \emph{every} available full profile satisfy
	$\varphi$. Therefore $[\varnothing]\varphi$ holds at $s$, as
	required. Closure under the three rules proves the final assertion.
\end{proof}

\section{Algebraic representation and finite relative covers}\label{sec:algebra}

We first represent the coalition-logic extension lattice $\lattice$
by an interval of subvarieties. This algebraic representation applies
to all structural intermediate logics, independently of whether they
admit a completeness theorem for a particular class of games.

A \emph{Boolean coalition algebra} is an expansion
\[
\mathbf A=(B,\wedge,\vee,\neg,0,1,(f_C)_{C\subseteq N})
\]
of a Boolean algebra by unary operations indexed by coalitions.
Let $\Vmcl$ be the variety determined by the Boolean equations and
\begin{align}
	f_C(0)&=0,
	&f_C(x)&\le f_D(x)\quad(C\subseteq D),\label{eq:alg1}\\
	f_\varnothing(\neg x\vee y)\wedge f_C(x)&\le f_C(y),
	&f_C(x)&\le f_\varnothing(1).\label{eq:alg2}
\end{align}
As usual, an inequality $s\le t$ abbreviates the Boolean equation
$s\wedge t\approx s$. The variety $\Vcl\subseteq\Vmcl$ is defined
by additionally requiring the algebraic counterparts of Ser, IA,
and Det:
\begin{align}
	f_C(1)&=1,\label{eq:alg-ser}\\
	f_C(x)\wedge f_D(y)&\le f_{C\cup D}(x\wedge y)
	\quad(C\cap D=\varnothing),\label{eq:alg-ia}\\
	f_C(x\vee y)&\le f_C(x)\vee f_N(y).\label{eq:alg-det}
\end{align}
All these identities are understood universally, for the indicated
coalitions and all algebra elements.

\begin{proposition}[Variety duality for the coalition-logic lattice]
	\label{prop:duality}
	The lattice $\lattice$ is order-dual to the interval of varieties
	$[\Vcl,\Vmcl]$. It is a complete algebraic lattice, and for every
	family $(L_j)_{j\in J}\subseteq\lattice$, including the empty family,
	\[
	\bigwedge_{j\in J}L_j
	=\CL\cap\bigcap_{j\in J}L_j,
	\qquad
	\bigvee_{j\in J}L_j
	=\Cn\!\left(\bigcup_{j\in J}L_j\right).
	\]
	Here $\bigcap\varnothing=\Fm_N$, the set of all formulas, so
	$\bigwedge\varnothing=\CL$ and $\bigvee\varnothing=\MCL$.
	An element $L\in\lattice$ is compact if and only if
	$L=\Cn(\Gamma)$ for some finite set $\Gamma\subseteq L$.
\end{proposition}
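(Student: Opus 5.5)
The plan is to run the standard Blok--Rautenberg style correspondence between logics and varieties, relativized to the interval $[\MCL,\CL]$. First I treat the full lattice $\mathscr E$ of structural extensions of \MCL\ (sets of formulas containing \MCL\ and closed under MP, US, CN, including the inconsistent set $\Fm_N$). To each $L\in\mathscr E$ I assign $\Var(L)$, the class of Boolean coalition algebras validating every $\varphi\in L$ as the equation $\varphi\approx1$. Conversely, to a subvariety $\mathcal V\subseteq\Vmcl$ I assign $\Log(\mathcal V)=\{\varphi:\mathcal V\models\varphi\approx1\}$. Two checks make these maps well defined. First, \eqref{eq:alg1}--\eqref{eq:alg2} are exactly the algebraic translations of MG, MC, Live and the derived $[C]\varphi\to e$. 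Hence $\Log(\mathcal V)$ contains \MCL, and it is closed under MP and US by the usual argument. It is closed under CN because $a=1$ gives $f_\varnothing(1)\le f_\varnothing(a)$ via \eqref{eq:alg2} with $x=1$, $y=a$, and then $f_C(b)\le f_\varnothing(1)\le f_\varnothing(a)$. Second, $\Var(L)$ is equationally definable, since $s\approx t$ is equivalent to $(s\leftrightarrow t)\approx1$.

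The key step is showing $\Log\Var(L)=L$ by a Lindenbaum--Tarski construction. Let $\varphi\equiv_L\psi$ iff $\varphi\leftrightarrow\psi\in L$. By the remark preceding Lemma~\ref{lem:frame-soundness}, provable equivalence is preserved by every $[C]$, so $\equiv_L$ is a congruence. The quotient $\Fm_N/{\equiv_L}$ lies in $\Vmcl$ because \MCL\ proves the translations of \eqref{eq:alg1}--\eqref{eq:alg2}. It validates $L$ because $L$ is closed under US. It refutes every $\varphi\notin L$ under the canonical assignment. Also $\Var\Log(\mathcal V)=\mathcal V$, since every equation is equivalent to one of the form $t\approx1$. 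So the two maps are mutually inverse and order-reversing. Restricting them to $[\MCL,\CL]$ and $[\Vcl,\Vmcl]$ gives the duality, once I note that $\Var(\CL)=\Vcl$, because \eqref{eq:alg-ser}--\eqref{eq:alg-det} translate Ser, IA, Det.

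For the lattice operations, an arbitrary intersection of members of $\mathscr E$ is again in $\mathscr E$. Intersecting further with \CL\ keeps the result in $\lattice$, and it is clearly the greatest lower bound there; the empty family gives \CL. For joins, $\Cn(\bigcup L_j)$ is the least element of $\mathscr E$ containing every $L_j$. It lies below \CL\ because \CL\ is such an element. For the empty family it equals \MCL. Hence $\lattice$ is complete.

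For algebraicity and compactness, $\Cn$ is a finitary closure operator, since derivations are finite and each rule has finitely many premises. So $\Cn(\Gamma)=\bigcup\{\Cn(\Gamma_0):\Gamma_0\subseteq\Gamma\text{ finite}\}$, and this union is directed. If $L=\Cn(\Gamma)$ with $\Gamma$ finite and $L\le\bigvee_j L_j$, then every $\gamma\in\Gamma$ has a derivation from finitely many $L_j$, so finitely many $L_j$ suffice. Conversely, $L=\bigvee\{\Cn(\Gamma_0):\Gamma_0\subseteq L\text{ finite}\}$ is a directed join, so a compact $L$ is below, hence equal to, some $\Cn(\Gamma_0)$. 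This representation also shows that every element is a join of compact elements, so $\lattice$ is algebraic.

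The main obstacle I expect is the Lindenbaum step. It must be checked carefully that the congruence property really uses only CN together with MG as stated, and that the quotient satisfies the inequalities of \eqref{eq:alg2} exactly as written.
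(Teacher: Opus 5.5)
Your proposal is correct and follows essentially the same route as the paper. Both arguments use a Lindenbaum--Tarski quotient whose congruence property comes from CN plus MG, algebraic soundness of CN via $f_C(x)\le f_\varnothing(1)$, Birkhoff's theorem through the translation $s\approx t\mapsto(s\leftrightarrow t)\approx1$, and finitarity of $\Cn$ for algebraicity and the compactness characterization. The differences are cosmetic: you set up the duality on all structural extensions of \MCL{} (including $\Fm_N$) before restricting, and you take $\Var(L)$ to be the class of models of $L$ rather than the variety generated by $\mathbf F_L$, but these coincide. Your appeal to \eqref{eq:alg2} with $x=1$, $y=a$ is unnecessary, since $a=1$ already gives $f_\varnothing(a)=f_\varnothing(1)$.
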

\begin{proof}
	For a structural intermediate logic $L$, define
	$\varphi\equiv_L\psi$ if $\varphi\leftrightarrow\psi\in L$.
	This is a Boolean congruence. To check compatibility with every
	strategic operator, suppose $\varphi\to\psi\in L$.
	By CN,
	\[
	[C]\varphi\to[\varnothing](\varphi\to\psi)\in L.
	\]
	Combining this with MG and propositional reasoning gives
	\[
	[C]\varphi\to[C]\psi\in L.
	\]
	Applying the same argument in the reverse direction shows that
	$\varphi\equiv_L\psi$ implies
	$[C]\varphi\equiv_L[C]\psi$. Thus $\equiv_L$ is a congruence
	for the entire signature, and the Lindenbaum--Tarski quotient
	$\mathbf F_L$ is a well-defined Boolean coalition algebra.
	
	The defining axioms of \MCL\ yield
	\eqref{eq:alg1}--\eqref{eq:alg2} in $\mathbf F_L$.
	In particular, CN applied to the theorem $\top$ gives
	$[C]\varphi\to[\varnothing]\top$, corresponding to
	$f_C(x)\le f_\varnothing(1)$. Conversely, the equations defining
	$\Vmcl$ validate every \MCL\ axiom scheme. They also make CN
	sound for \emph{global equational validity}: if $\varphi$ evaluates
	to $1$ under every valuation in an algebra, then
	\[
	f_C(\llbracket\psi\rrbracket)
	\le f_\varnothing(1)
	=f_\varnothing(\llbracket\varphi\rrbracket),
	\]
	which is precisely the algebraic validity of the CN conclusion.
	MP and US preserve global validity as well.
	
	For completeness of the correspondence, observe that every valuation
	of the propositional variables into $\mathbf F_L$ can be represented
	by a uniform substitution: choose a formula representative of the
	class assigned to each variable. If $\varphi\in L$, closure under
	US therefore makes $\varphi\approx1$ valid in $\mathbf F_L$.
	If $\varphi\notin L$, the canonical valuation $p\mapsto[p]$
	refutes $\varphi\approx1$. Hence
	\[
	\Th(\mathbf F_L)=L,
	\qquad
	\Th(\mathbf A):=\{\varphi:\mathbf A\models\varphi\approx1\}.
	\]
	For a variety $\mathcal W$, write
	$\Th(\mathcal W)=\bigcap_{\mathbf A\in\mathcal W}\Th(\mathbf A)$.
	If $\Vcl\subseteq\mathcal W\subseteq\Vmcl$, algebraic soundness
	shows that $\Th(\mathcal W)\in\lattice$.
	Conversely, Boolean equivalence turns every equation $s\approx t$
	into the equation $(s\leftrightarrow t)\approx1$.
	Birkhoff's theorem therefore implies that a variety of Boolean
	coalition algebras is determined by its theorem set. Since
	$L\subseteq\CL$, the variety $\Var(\mathbf F_L)$ contains
	$\Vcl$ and lies in the stated interval. Consequently
	\[
	L\longmapsto\Var(\mathbf F_L),
	\qquad
	\mathcal W\longmapsto\Th(\mathcal W)
	\]
	are mutually inverse, inclusion-reversing correspondences between
	$\lattice$ and $[\Vcl,\Vmcl]$.
	
	Arbitrary intersections of nonempty families of structural theorem
	sets remain structural; their joins are the deductive closures of
	their unions. The displayed formulas also account for empty
	families. Finally, the closure operator $\Cn$ is finitary: every
	derivation from additional axioms uses only finitely many of them.
	Thus each $L\in\lattice$ is the join of the finitely generated
	extensions $\Cn(\Gamma)$ for finite $\Gamma\subseteq L$.
	Each such extension is compact by finitariness; conversely,
	compactness of $L$ applied to this join provides a finite
	$\Gamma\subseteq L$ with $L=\Cn(\Gamma)$. This proves both
	algebraicity and the claimed characterization of compact elements.
\end{proof}

Since $N$ is finite, each of the three schemes Ser, IA, and Det
has only finitely many coalition-index choices; US takes care of
its formula parameters. Hence the eight basic extensions
\begin{equation}\label{eq:basic}
	B_\Sigma:=\Cn(\Sigma),\qquad
	\Sigma\subseteq\{\Ser,\IA,\Det\},
\end{equation}
are compact elements of $\lattice$. Here $\Cn(\Sigma)$ denotes
closure under MP, US, and CN after adjoining every instance of
the schemes named in $\Sigma$. We abbreviate
$S=B_{\{\Ser\}}$, $I=B_{\{\IA\}}$, and $D=B_{\{\Det\}}$.

A \emph{coatom} of $\lattice$ is a maximal proper element below
$\CL$. Under Proposition~\ref{prop:duality}, it corresponds to a
\emph{cover} of $\Vcl$ in the subvariety interval
$[\Vcl,\Vmcl]$. We will use the subdirect decomposition theorem
and J\'onsson's lemma in this setting
\cite[Corollary~3.2]{Jonsson1967}; see also
\cite{BurrisSankappanavar1981} for standard universal-algebraic
constructions.

\begin{proposition}[The eight basic logics do not form a sublattice]
	\label{prop:non-sublattice}
	If $|N|\ge2$, then
	\[
	\MCL\subsetneq
	(\MCL+\Ser)\cap(\MCL+\IA).
	\]
	In particular, the eight basic extensions form a Boolean cube under
	inclusion, but not a sublattice of $\lattice$.
\end{proposition}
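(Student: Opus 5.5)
The plan is to exhibit one explicit formula that is a theorem of both $S=\MCL+\Ser$ and $I=\MCL+\IA$ but is refuted on some general concurrent game frame. Lemma~\ref{lem:frame-soundness} then shows that it is not in \MCL. The inclusion $\MCL\subseteq S\cap I$ is trivial. By Proposition~\ref{prop:duality} the lattice meet of $S$ and $I$ is $\CL\cap S\cap I=S\cap I$, and in the cube the greatest lower bound of $\{\Ser\}$ and $\{\IA\}$ is $B_\varnothing=\MCL$. So a strict inclusion shows that the cube is not closed under the meets of $\lattice$, hence is not a sublattice.

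\textbf{Choice of formula.} Let $a\ne b$ be agents, which exist since $|N|\ge2$. Let $\alpha$ be the IA instance
\[
[a]p\wedge[b]q\to[\{a,b\}](p\wedge q),
\]
and put $\chi:=\Box e\vee\alpha$. Since $S=\Cn(e)$ and $\Box$-necessitation holds in every structural extension, $\Box e\in S$, so $\chi\in S$. Likewise $\alpha\in I$, so $\chi\in I$. The naive candidate $e\vee\alpha$ does not work. When $e$ fails, every $[C]\varphi$ fails because $\MCL\vdash[C]\varphi\to e$, so $\alpha$ holds vacuously and $e\vee\alpha$ is valid. This is why the disjunct is pushed one step deeper, to $\Box e$.

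\textbf{Countermodel.} Take $W=\{s,t,u\}$ and $U=\{0,1\}$, and fix the actions of the agents other than $a,b$ to $0$. At $s$, let the full profile with $a\mapsto0,b\mapsto0$ have outcome $\{t\}$, and the one with $a\mapsto1,b\mapsto1$ have outcome $\{u\}$. All other profiles at $s$, and all profiles at $t$ and $u$, have empty outcome, so $t$ and $u$ are dead ends. Set $V(p)=\{t\}$ and $V(q)=\{u\}$.

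\begin{enumerate}[label=\textup{(\arabic*)},leftmargin=2.5em,itemsep=2pt]
\item $\Box e$ fails at $s$, because $sRt$ and $e$ is false at the dead end $t$.
\item $[a]p$ holds at $s$ via $a\mapsto0$, whose outcome set is $\{t\}$.
\item $[b]q$ holds at $s$ via $b\mapsto1$, whose outcome set is $\{u\}$.
\item $[\{a,b\}](p\wedge q)$ fails at $s$, since $\llbracket p\wedge q\rrbracket=\varnothing$ while every available $\{a,b\}$-profile has nonempty outcomes.
\end{enumerate}

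Hence $\chi$ fails at $s$. By Lemma~\ref{lem:frame-soundness}, $\Log(F)\supseteq\MCL$, so $\chi\notin\MCL$.

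The only delicate point is choosing the disjunct $\Box e$ rather than $e$, as explained above. The remaining verifications are routine checks of the ability clause \eqref{eq:semantics}.
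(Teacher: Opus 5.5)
Your proof is correct and is essentially the paper's own argument: the same separating formula $\Box e\vee\alpha$, membership in $S$ via derived $\Box$-necessitation of $e$, and a three-state countermodel in which the two IA-incompatible profiles lead to distinct successors, at least one of them a dead end. The paper gives the second successor a deterministic self-loop, which makes no difference. The one thing you assume without proof is the cube order $B_\Sigma\subseteq B_\Theta\Leftrightarrow\Sigma\subseteq\Theta$, which you need in order to conclude that $S\cap I$ is none of the eight logics; the paper likewise only points to the pure-defect frames of Lemma~\ref{lem:defects} for it.
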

\begin{proof}
	Choose distinct agents $a,b\in N$, and let
	\[
	\iota(p,q):=
	([a]p\wedge[b]q)\to[a,b](p\wedge q)
	\]
	be an instance of IA. Put $e=[\varnothing]\top$ and
	$\Box\varphi=e\to[\varnothing]\varphi$, as in
	\eqref{eq:derived-box}, and consider
	$\delta:=\iota(p,q)\vee\Box e$.
	Because $\iota(p,q)$ is an IA instance,
	$\delta\in\MCL+\IA$. Likewise, $e$ is a Ser instance and
	the derived necessitation rule yields $\Box e$, so
	$\delta\in\MCL+\Ser$.
	
	For the strict inclusion, take a frame with three states $r,u,v$.
	At $r$, let the only available full profiles have coordinates
	$(0,0)$ and $(1,1)$ for agents $a,b$, respectively, with outcomes
	$\{u\}$ and $\{v\}$. Fix a designated action for every other agent
	in both profiles, leaving all unspecified profiles unavailable.
	Give $u$ no available full profile, and give $v$ one deterministic
	self-loop. Choose a valuation with $p$ true at $u$ and false at $v$,
	and $q$ true at $v$ and false at $u$.
	At $r$, action $0$ of $a$ guarantees $p$, and action $1$ of $b$
	guarantees $q$; however, no available joint action guarantees
	$p\wedge q$. Thus $\iota(p,q)$ is false at $r$. Since the
	accessible state $u$ is a dead end, $\Box e$ is also false at
	$r$. Hence $\delta$ is not valid on this general concurrent game,
	and soundness of \MCL\ gives $\delta\notin\MCL$.
	The asserted strict inclusion follows. The Boolean order of the
	eight basic systems is witnessed by the independent pure-defect
	frames used later in Lemma~\ref{lem:defects}.
\end{proof}

\paragraph{Relative criticality.}
Let $\mathcal K\subseteq\Vmcl$ be a subvariety. We call a
nontrivial finite algebra $\mathbf A\in\Vmcl$
\emph{$\mathcal K$-critical} if $\mathbf A\notin\mathcal K$
and every proper subalgebra and every proper homomorphic image of
$\mathbf A$ belongs to $\mathcal K$. Here a proper homomorphic
image is a quotient by a nonidentity congruence. In particular,
if $\mathbf A$ is simple, all its proper homomorphic images are
trivial and hence lie in $\mathcal K$; only its proper subalgebras
require further checking. The critical witnesses constructed below
are simple. The following sufficient criterion converts such a
witness into a relative variety cover, and thus into a coatom of
the coalition-logic lattice.

\begin{lemma}[Finite relative-cover criterion]\label{lem:finite-cover}
	Let $\mathbf A\in\Vmcl\setminus\Vcl$ be a finite simple algebra.
	If every proper subalgebra of $\mathbf A$ belongs to $\Vcl$, then
	\[
	\mathcal W:=\Var(\Vcl\cup\{\mathbf A\})
	\]
	covers $\Vcl$ in the subvariety lattice of $\Vmcl$.
	Consequently,
	\begin{equation}\label{eq:coatom-formula}
		C_{\mathbf A}:=\CL\cap\Th(\mathbf A)
	\end{equation}
	is a coatom of $\lattice$.
\end{lemma}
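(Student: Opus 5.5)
We prove this with Jónsson's lemma. The variety $\Vmcl$ is congruence distributive, since every algebra in it has a Boolean lattice reduct. Take any subvariety $\mathcal U$ with $\Vcl\subsetneq\mathcal U\subseteq\mathcal W$. We must show $\mathcal U=\mathcal W$. Since $\mathbf A\in\mathcal W$ and $\mathcal W$ is generated by $\Vcl\cup\{\mathbf A\}$, it suffices to prove $\mathbf A\in\mathcal U$. This is the central claim; the duality of Proposition~\ref{prop:duality} then does the rest.

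\textbf{Step 1: a subdirectly irreducible witness.} Because $\mathcal U\neq\Vcl$ and every variety is generated by its subdirectly irreducible members, there is a subdirectly irreducible $\mathbf B\in\mathcal U\setminus\Vcl$. Write $\mathcal W=\Var(\mathcal K)$ with $\mathcal K=\Vcl\cup\{\mathbf A\}$. Jónsson's lemma \cite[Corollary~3.2]{Jonsson1967} gives $\mathbf B\in HSP_u(\mathcal K)$.

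\textbf{Step 2: locating $\mathbf B$.} An ultraproduct of members of $\mathcal K$ is, by Łoś's theorem, either in $\Vcl$ (ultraproducts preserve equations) or isomorphic to $\mathbf A$. The second case arises when the ultrafilter concentrates on copies of $\mathbf A$, because $\mathbf A$ is finite and ultrapowers of a finite structure are isomorphic to it. So $\mathbf B\in HS(\mathbf A)\cup HS(\Vcl)$. Since $HS(\Vcl)\subseteq\Vcl$ and $\mathbf B\notin\Vcl$, we get $\mathbf B\cong\mathbf S/\theta$ for some subalgebra $\mathbf S\le\mathbf A$. If $\mathbf S$ were proper, it would lie in $\Vcl$ by hypothesis, and so would its quotient $\mathbf B$, a contradiction. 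Hence $\mathbf S=\mathbf A$. Because $\mathbf A$ is simple, $\theta$ is either the identity or the total congruence. The total congruence yields a trivial algebra, which lies in $\Vcl$. Therefore $\theta$ is the identity and $\mathbf B\cong\mathbf A$, so $\mathbf A\in\mathcal U$ and $\mathcal U=\mathcal W$.

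\textbf{Step 3: strictness and the coatom.} The inclusion $\Vcl\subsetneq\mathcal W$ is strict because $\mathbf A\notin\Vcl$, and $\mathcal W\subseteq\Vmcl$. Hence $\mathcal W$ covers $\Vcl$ in $[\Vcl,\Vmcl]$. Under the inclusion-reversing bijection of Proposition~\ref{prop:duality}, covers of $\Vcl$ correspond to coatoms of $\lattice$. It remains to compute $\Th(\mathcal W)$. Theorem sets of joins of varieties are intersections, so
\[
\Th(\mathcal W)=\Th(\Vcl)\cap\Th(\mathbf A)=\CL\cap\Th(\mathbf A)=C_{\mathbf A}.
\]
Here $\Th(\Vcl)=\CL$ by the same correspondence.

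\textbf{Main obstacle.} The delicate point is the ultraproduct bookkeeping in Step 2: justifying that $P_u(\mathcal K)\subseteq\Vcl\cup I(\mathbf A)$. This holds because $\mathbf A$ is finite, so the first-order sentence describing $\mathbf A$ up to isomorphism transfers through ultrapowers. We also need congruence distributivity, which follows from the lattice reduct.
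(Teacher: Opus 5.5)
Your proposal is correct and follows essentially the same route as the paper: you extract a subdirectly irreducible member outside $\Vcl$, apply J\'onsson's lemma, and split the ultraproduct into a $\Vcl$-case and an ultrapower of the finite algebra $\mathbf A$. You then rule out proper subalgebras, use simplicity to get an isomorphism with $\mathbf A$, and finish with $\Th(\mathcal W)=\CL\cap\Th(\mathbf A)$ and the order duality.
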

\begin{proof}
	Since $\mathbf A\notin\Vcl$, the inclusion
	$\Vcl\subsetneq\mathcal W\subseteq\Vmcl$ is strict.
	Let $\mathcal V$ satisfy
	\[
	\Vcl\subsetneq\mathcal V\subseteq\mathcal W,
	\]
	and choose $\mathbf D\in\mathcal V\setminus\Vcl$.
	By the subdirect representation theorem, $\mathbf D$ embeds into a
	product of subdirectly irreducible homomorphic images
	$\mathbf C_j\in\mathcal V$. At least one factor, say $\mathbf C$,
	lies outside $\Vcl$; otherwise closure of $\Vcl$ under products
	and subalgebras would force $\mathbf D\in\Vcl$. Hence
	\begin{equation}\label{eq:outside-si}
		\mathbf C\in\mathcal V\setminus\Vcl,
		\qquad \mathbf C\text{ is subdirectly irreducible}.
	\end{equation}
	
	Every algebra in $\Vmcl$ has the Boolean majority term
	\[
	m(x,y,z)=(x\wedge y)\vee(x\wedge z)\vee(y\wedge z).
	\]
	Thus $\Vmcl$, and in particular $\mathcal W$, is
	congruence-distributive. By J\'onsson's lemma in its general
	subdirectly irreducible form \cite[Corollary~3.2]{Jonsson1967},
	\begin{equation}\label{eq:jonsson-step}
		\mathbf C\in\mathbf H\mathbf S\mathbf P_u
		(\Vcl\cup\{\mathbf A\}).
	\end{equation}
	Explicitly, $\mathbf C$ is a homomorphic image of a subalgebra of
	an ultraproduct
	\[
	\mathbf P=\prod_{j\to\mathcal U}\mathbf D_j,
	\qquad \mathbf D_j\in\Vcl\cup\{\mathbf A\},
	\]
	for an ultrafilter $\mathcal U$ on an index set $J$.
	This form of J\'onsson's lemma applies to an arbitrary generating
	class; it does not assume that the class is finite.
	
	Let $J_A=\{j\in J:\mathbf D_j\cong\mathbf A\}$. Because
	$\mathbf A\notin\Vcl$, every factor indexed by $J\setminus J_A$
	belongs to $\Vcl$. If $J\setminus J_A\in\mathcal U$, the
	ultraproduct $\mathbf P$ is isomorphic to an ultraproduct of
	$\Vcl$-algebras, and hence belongs to $\Vcl$;
	then closure under subalgebras and homomorphic images
	would give $\mathbf C\in\Vcl$, a contradiction. Therefore
	$J_A\in\mathcal U$. Restricting the ultrafilter to $J_A$ and
	identifying all factors there with $\mathbf A$ makes
	$\mathbf P$ an ultrapower of $\mathbf A$.
	Since $\mathbf A$ is finite, each sequence in this ultrapower is
	$\mathcal U$-equivalent to a constant sequence: the finitely many
	fibres of the sequence form a partition, exactly one member of
	which is selected by the ultrafilter. Thus $\mathbf P\cong\mathbf A$,
	and hence
	\[
	\mathbf C\in\mathbf H\mathbf S(\mathbf A).
	\]
	
	Choose a subalgebra $\mathbf B\le\mathbf A$ and a surjective
	homomorphism $h:\mathbf B\twoheadrightarrow\mathbf C$.
	If $\mathbf B<\mathbf A$, the hypothesis and closure of $\Vcl$
	under homomorphic images imply $\mathbf C\in\Vcl$, contrary to
	\eqref{eq:outside-si}. Therefore $\mathbf B=\mathbf A$.
	Simplicity of $\mathbf A$ now forces $\ker h$ to be either the
	identity or the universal congruence. The latter would make
	$\mathbf C$ trivial, which is impossible because
	$\mathbf C\notin\Vcl$. Thus $h$ is an isomorphism and
	$\mathbf C\cong\mathbf A$.
	It follows that $\mathbf A\in\mathcal V$, so
	$\mathcal W\subseteq\mathcal V$; the reverse inclusion holds by
	assumption. Hence $\mathcal V=\mathcal W$, proving that
	$\mathcal W$ covers $\Vcl$.
	Finally,
	\[
	\Th(\mathcal W)
	=\Th(\Vcl)\cap\Th(\mathbf A)
	=\CL\cap\Th(\mathbf A).
	\]
	The order-duality of Proposition~\ref{prop:duality} turns the
	relative cover into the asserted coatom.
\end{proof}

The next lemma isolates the mechanism behind our finite critical
witnesses. Cyclic Boolean expansions are closely related to the
rotational cubes studied by Cz\'edli and Nagy
\cite[Theorem~2.1 and Lemmas~3.10--3.11]{CzedliNagy2013}.
We give the short argument needed here to distinguish the use of
cyclicity from the additional use of a prime cycle length.

\begin{lemma}[Cyclic-shift criterion]\label{lem:cycle}
	Let $n\ge2$, let $W_n=\mathbb Z/n\mathbb Z$, and let
	$\mathbf A$ be a Boolean coalition algebra with Boolean reduct
	$\setP(W_n)$. Suppose that, for some coalition $C$,
	\begin{equation}\label{eq:shift}
		f_C(X)=\{s\in W_n:s+1\in X\}.
	\end{equation}
	Then $\mathbf A$ is simple. If $n$ is prime, its only possible
	proper subalgebra is $\{\varnothing,W_n\}$. If, in addition,
	$f_D(0)=0$ and $f_D(1)=1$ for every coalition $D$, then this
	two-element set is a subalgebra and is a \CL-algebra.
\end{lemma}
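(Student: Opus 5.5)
Since the Boolean reduct of $\mathbf A$ is the finite Boolean algebra $\setP(W_n)$, every congruence of $\mathbf A$ is determined by a Boolean filter, equivalently by the ideal of elements congruent to $0$. That ideal is principal, say generated by some $X\subseteq W_n$. The plan is to show that the only ideals compatible with the operation $f_C$ are $\{\varnothing\}$ and $\setP(W_n)$.

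\emph{Simplicity.} Let $\theta$ be a congruence with kernel ideal $\downarrow X$. Compatibility with $f_C$ and with negation forces $\downarrow X$ to be closed under $Y\mapsto f_C(Y)$. Indeed, if $Y\,\theta\,\varnothing$, then $f_C(Y)\,\theta\,f_C(\varnothing)=\varnothing$, where $f_C(\varnothing)=\varnothing$ by \eqref{eq:shift}. Here $f_C(Y)=Y-1$, so the ideal is closed under the shift by $-1$. Now suppose $X\ne\varnothing$ and pick $s\in X$. Then $\{s\},\{s-1\},\dots,\{s-(n-1)\}$ all lie in the ideal, and their union is $W_n$. Hence $\theta$ is universal. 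So every congruence is the identity or the universal one, and $\mathbf A$ is nontrivial because $n\ge2$. Therefore $\mathbf A$ is simple.

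\emph{Subalgebras when $n$ is prime.} Let $\mathbf B$ be a subalgebra different from $\{\varnothing,W_n\}$; the goal is $\mathbf B=\setP(W_n)$. Since $\mathbf B$ is a finite Boolean subalgebra of $\setP(W_n)$, it is determined by its atoms, which form a partition $\mathcal Q$ of $W_n$. The set $\mathbf B$ is closed under $Y\mapsto Y-1$, a Boolean automorphism of $\setP(W_n)$. This automorphism maps $\mathbf B$ injectively into $\mathbf B$, hence onto $\mathbf B$ by finiteness, so it permutes the atoms. The cyclic group $\mathbb Z/n\mathbb Z$ therefore acts on $\mathcal Q$. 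Take an atom $Q$ with stabilizer $H$. Then $Q$ is a union of cosets of $H$. Because $n$ is prime, $H$ is either trivial or all of $\mathbb Z/n\mathbb Z$.
\begin{itemize}[leftmargin=2em]
\item If $H$ is all of $\mathbb Z/n\mathbb Z$, then $Q$ is invariant and nonempty, so $Q=W_n$ and $\mathbf B=\{\varnothing,W_n\}$, which was excluded.
\item If $H$ is trivial, the orbit of $Q$ consists of $n$ pairwise disjoint translates of $Q$ inside $W_n$. So $|Q|=1$, and since the atoms partition $W_n$, all atoms are singletons and $\mathbf B=\setP(W_n)$.
\end{itemize}
The main point to get right is this orbit argument, specifically that a shift-invariant Boolean subalgebra corresponds to a shift-invariant partition. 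This holds because the shift is a Boolean automorphism.

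\emph{The two-element subalgebra.} Assume in addition that $f_D(0)=0$ and $f_D(1)=1$ for every coalition $D$. Then $\{\varnothing,W_n\}$ is closed under every operation, so it is a subalgebra. To see that it satisfies \eqref{eq:alg-ser}--\eqref{eq:alg-det}:
\begin{itemize}[leftmargin=2em]
\item \eqref{eq:alg-ser} is immediate.
\item For \eqref{eq:alg-ia}, if $x\wedge y=0$ then one of $x,y$ is $0$, so the left side is $0$. Otherwise $x=y=1$ and the right side is $1$.
\item For \eqref{eq:alg-det}, if $x\vee y=0$ then both sides are $0$. Otherwise one of $x,y$ is $1$, so the corresponding disjunct on the right, $f_C(x)$ or $f_N(y)$, equals $1$.
\end{itemize}
The two-element algebra also lies in $\Vmcl$ as a subalgebra of $\mathbf A$, so it is a \CL-algebra.
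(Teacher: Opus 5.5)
Your proof is correct and follows essentially the paper's route. The zero-ideal of a nonidentity congruence is closed under the shift and so contains every singleton. For prime $n$, the atoms of a subalgebra form a shift-invariant partition to which an orbit argument applies; you use the stabilizer of a single atom where the paper uses transitivity on blocks and a divisibility count, which is only a cosmetic difference.

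One small repair: the lemma assumes only that $\mathbf A$ is a Boolean coalition algebra, not that $\mathbf A\in\Vmcl$. So instead of inheriting \eqref{eq:alg1}--\eqref{eq:alg2} from $\mathbf A$, observe, as the paper does, that every operation restricts to the identity on $\{\varnothing,W_n\}$, so these identities hold directly.
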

\begin{proof}
	Let $\Theta$ be a nonidentity congruence of $\mathbf A$ and
	$J=\{X\subseteq W_n:X\equiv_\Theta\varnothing\}$ its Boolean
	zero-class. Since $J$ is a nonzero ideal of the finite Boolean
	algebra $\setP(W_n)$, it contains an atom $\{s\}$.
	The operation $f_C$ fixes $\varnothing$ and cyclically permutes the
	atoms. Compatibility with $\Theta$ therefore puts every cyclic
	translate of $\{s\}$ in $J$. Their join is $W_n$, so $J$
	contains the unit and $\Theta$ is the universal congruence.
	Thus $\mathbf A$ is simple for every $n\ge2$.
	
	Now let $n=p$ be prime, and let $\mathbf B\le\mathbf A$.
	Its Boolean reduct is a Boolean subalgebra of $\setP(W_p)$
	closed under the shift $g=f_C$. Since $g^p$ is the identity,
	$g$ restricts to an automorphism of this subalgebra.
	The atoms of its Boolean reduct partition $W_p$, and the cyclic
	shift acts transitively on the blocks: any point can be shifted
	to any other point, and the shift maps blocks to blocks.
	Consequently all blocks have the same cardinality, and their
	number divides $p$. There is therefore either a single block,
	giving $\{\varnothing,W_p\}$, or $p$ singleton blocks, giving
	the whole of $\setP(W_p)$. This proves the assertion about proper
	subalgebras. If every $f_D$ fixes $0$ and $1$, the two-element
	Boolean subalgebra is closed under all coalition operations,
	which restrict to the identity. It satisfies all the defining
	\CL-equations.
\end{proof}

\section{Serial coatoms and finite critical coalition algebras}\label{sec:finite-coatoms}

We now examine the serial part of the upper boundary of the
coalition-logic lattice $\lattice$, namely the coatoms belonging to
the principal interval $[S,\CL]$. The finite relative-cover criterion
of Lemma~\ref{lem:finite-cover} will turn explicit game algebras into
maximal proper intermediate logics. The constructions distinguish the
strategic condition that fails from the algebraic mechanism that
ensures criticality.

For a finite general concurrent game $G$ with state set $W$, its
\emph{complex coalition algebra} has Boolean reduct $\setP(W)$ and
operations
\begin{equation}\label{eq:complex}
	f_C(X)=\{s\in W:\text{some available $C$-profile at $s$}
	\text{ ensures }X\},\qquad X\subseteq W.
\end{equation}
By the semantic clause \eqref{eq:semantics}, these algebras belong to
$\Vmcl$. We construct three families on the cyclic state space
$W_p=\mathbb Z/p\mathbb Z$, with $p$ prime. Fix distinct agents
$a,b\in N$. All other agents use the designated dummy action $0$:
more precisely, although the game has a common action alphabet,
full profiles with a nonzero action at any other coordinate are
unavailable. We therefore specify an available full profile solely
by its pair of $a$- and $b$-coordinates. Every unlisted profile has
empty outcome set; all displayed outcomes are nonempty, and state
arithmetic is modulo $p$.

Each family is serial and has a singleton-coalition operation equal
to the cyclic shift on $\setP(W_p)$. The remaining profiles are
chosen to realize one of the following strategic patterns:
\begin{center}
	\small
	\begin{tabular}{@{}lccc l@{}}
		\toprule
		Critical game & Ser & IA & Det & Local strategic defect \\
		\midrule
		$G_p^{\mathrm{IA}}$ & yes & no & yes & incompatible available choices \\
		$G_p^{\mathrm{Det}}$ & yes & yes & no & nondeterministic full profile \\
		$G_p^{\mathrm{Mix}}$ & yes & no & no & both defects \\
		\bottomrule
	\end{tabular}
\end{center}
Here ``yes'' means that the corresponding axiom scheme is valid on
the entire frame, and ``no'' means that it has a counterinstance.
In particular, failures of IA and Det will be verified by valuations,
not merely inferred from the action-level conditions.

\subsection{Incompatible actions with deterministic outcomes}

\begin{construction}[Independence defect]\label{con:ia}
	For every prime $p\ge2$, let the common action alphabet be
	$U=\{0,1,2\}$. At each state $s\in W_p$, allow exactly the
	following pairs:
	\begin{equation}\label{eq:ia-table}
		\begin{array}{c|cccccc}
			(a,b)&(0,1)&(1,0)&(1,1)&(1,2)&(2,1)&(2,2)\\\hline
			O_s(a,b)&\{s+1\}&\{s\}&\{s+1\}&\{s\}&\{s\}&\{s+1\}.
		\end{array}
	\end{equation}
	Write $G_p^{\mathrm{IA}}$ for this game and
	$\mathbf A_p^{\mathrm{IA}}$ for its complex coalition algebra.
\end{construction}

Every state has available full profiles, all with singleton outcome
sets, so Ser and Det hold. Agent $a$ can choose $0$ to guarantee
$s+1$, while every other available action of $a$ has $s+1$ among
its aggregate outcomes. It follows that
\begin{equation}\label{eq:ia-shift}
	f_a(X)=\{s\in W_p:s+1\in X\}.
\end{equation}
To refute IA, fix a state $s$ and a valuation with
$V(p)=\{s+1\}$ and $V(q)=\{s\}$. Agent $a$ can guarantee $p$
by choosing $0$, and agent $b$ can guarantee $q$ by choosing $0$.
These choices cannot be combined into an available joint action;
indeed, no available full profile can guarantee $p\wedge q$,
because $V(p)\cap V(q)=\varnothing$. Thus
$([a]p\wedge[b]q)\to[a,b](p\wedge q)$ fails at $s$.

\begin{proposition}\label{prop:ia-family}
	For every prime $p\ge2$,
	\[
	C_p^{\mathrm{IA}}=\CL\cap\Th(\mathbf A_p^{\mathrm{IA}})
	\]
	is a coatom of $\lattice$ containing Ser and Det but not IA.
\end{proposition}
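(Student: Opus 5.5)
We will apply Lemma~\ref{lem:finite-cover} to $\mathbf A=\mathbf A_p^{\mathrm{IA}}$, with Lemma~\ref{lem:cycle} supplying its hypotheses. First, $\mathbf A_p^{\mathrm{IA}}\in\Vmcl$ because it is a complex algebra of a general concurrent game; this is the remark after \eqref{eq:complex}, together with Lemma~\ref{lem:frame-soundness} read algebraically. Next, $\mathbf A_p^{\mathrm{IA}}\notin\Vcl$, since the valuation $V(p)=\{s+1\}$, $V(q)=\{s\}$ given before the statement refutes an IA instance on $G_p^{\mathrm{IA}}$. Equivalently, the equation \eqref{eq:alg-ia} fails for $C=\{a\}$, $D=\{b\}$ at the elements $\{s+1\}$ and $\{s\}$. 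By \eqref{eq:ia-shift}, $f_a$ is exactly the cyclic shift \eqref{eq:shift}, so Lemma~\ref{lem:cycle} makes $\mathbf A$ simple. Because $p$ is prime, that lemma also says the only proper subalgebra is $\{\varnothing,W_p\}$.

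To finish the hypotheses, I verify $f_D(\varnothing)=\varnothing$ and $f_D(W_p)=W_p$ for every coalition $D$. The first holds because all outcome sets of available profiles are nonempty. The second holds because every state has an available full profile, and any $D$-restriction of it is an available $D$-profile with outcomes contained in $W_p$. Lemma~\ref{lem:cycle} then makes $\{\varnothing,W_p\}$ a \CL-subalgebra. Every proper subalgebra therefore lies in $\Vcl$, and Lemma~\ref{lem:finite-cover} yields that $C_p^{\mathrm{IA}}$ is a coatom of $\lattice$.

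The membership claims follow next. IA fails: the refuted instance is a theorem of \CL, but it is not valid in $\mathbf A$, so it lies outside $C_p^{\mathrm{IA}}$. Ser holds: the Ser instances are \CL-theorems, and they are valid in $\mathbf A$ because $f_C(W_p)=W_p$. For Det, it suffices to verify \eqref{eq:alg-det} in $\mathbf A$ directly. Suppose $s\in f_C(X\cup Y)$ via an available $C$-profile $\sigma$ and $s\notin f_C(X)$. Then some available full extension $\pi$ of $\sigma$ has outcome not in $X$. That outcome is a singleton $\{t\}$, and $t\in X\cup Y$, so $t\in Y$. Then $\pi$ itself, as an available $N$-profile, ensures $Y$, giving $s\in f_N(Y)$. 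Thus determinism of every available profile validates Det on this frame.

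The main obstacle is the bookkeeping that the two-element subalgebra is closed under \emph{all} $f_D$, including $D=\varnothing$ and coalitions containing dummy agents. I would handle this uniformly via the availability argument above: dummy agents only have action $0$ available, and every available profile has a nonempty outcome. I would also double-check \eqref{eq:ia-shift} against table \eqref{eq:ia-table}. Agent $a$'s choice $0$ forces $s+1$. Choices $1$ and $2$ each have both $s$ and $s+1$ among their outcomes, so they can ensure $X$ only when $\{s,s+1\}\subseteq X$, which already implies $s+1\in X$. Hence $f_a(X)$ is exactly $\{s:s+1\in X\}$.
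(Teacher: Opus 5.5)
Your proposal is correct and follows the same route as the paper's proof. Lemma~\ref{lem:cycle}, applied to the shift $f_a$, gives simplicity and the unique two-element proper subalgebra; seriality together with nonempty outcome sets makes that subalgebra a \CL-algebra; the IA countervaluation places $\mathbf A_p^{\mathrm{IA}}$ outside $\Vcl$; and Lemma~\ref{lem:finite-cover} then yields the coatom. Your direct check of the Det inequality from determinism simply writes out what the paper cites as ``validity of Det in the game.''
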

\begin{proof}
	By \eqref{eq:ia-shift} and Lemma~\ref{lem:cycle}, the finite
	algebra $\mathbf A_p^{\mathrm{IA}}$ is simple and its only proper
	subalgebra is the two-element Boolean algebra. Seriality gives
	$f_C(0)=0$ and $f_C(1)=1$ for every coalition $C$, so that
	subalgebra belongs to $\Vcl$. The displayed valuation shows that
	$\mathbf A_p^{\mathrm{IA}}\notin\Vcl$. Lemma~\ref{lem:finite-cover}
	therefore gives a coatom. Validity of Ser and Det in the game,
	and failure of IA, establish its stated type.
\end{proof}

\subsection{Independent actions with nondeterministic outcomes}

\begin{construction}[Determinism defect]\label{con:det}
	For every prime $p\ge3$, take $U=\{0,1\}$ and make all four
	pairs of actions of $a,b$ available, with outcomes
	\begin{equation}\label{eq:det-table}
		\begin{array}{c|cccc}
			(a,b)&(0,0)&(1,0)&(0,1)&(1,1)\\\hline
			O_s(a,b)&\{s+1\}&\{s+1\}&\{s+1\}&\{s,s+2\}.
		\end{array}
	\end{equation}
	Denote the game and its complex algebra by $G_p^{\mathrm{Det}}$
	and $\mathbf A_p^{\mathrm{Det}}$, respectively.
\end{construction}

The availability of every $a,b$-pair makes independent available
choices compatible, including those involving dummy agents. Thus
IA holds, and Ser is immediate. Agent $b$ can choose $0$ to
guarantee $s+1$, whereas action $1$ admits $s+1$ among its
aggregate outcomes. Hence
\begin{equation}\label{eq:det-shift}
	f_b(X)=\{s\in W_p:s+1\in X\}.
\end{equation}
For the failure of Det, choose $V(p)=\{s\}$ and
$V(q)=\{s+2\}$. Since $p\ge3$, these are distinct, and the
profile $(1,1)$ guarantees $p\vee q$. No available full profile
guarantees either $p$ or $q$ alone: the other three profiles lead
to $s+1$, and $(1,1)$ has both $s$ and $s+2$ as outcomes.
Consequently the instance
$[N](p\vee q)\to([N]p\vee[N]q)$ is false at $s$.

\begin{proposition}\label{prop:det-family}
	For every prime $p\ge3$,
	\[
	C_p^{\mathrm{Det}}=\CL\cap\Th(\mathbf A_p^{\mathrm{Det}})
	\]
	is a coatom of $\lattice$ containing Ser and IA but not Det.
\end{proposition}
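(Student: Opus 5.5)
The argument follows the proof of Proposition~\ref{prop:ia-family}, with the shift now supplied by agent $b$. By \eqref{eq:det-shift}, the operation $f_b$ of $\mathbf A_p^{\mathrm{Det}}$ is exactly the cyclic shift \eqref{eq:shift} on $\setP(W_p)$. Lemma~\ref{lem:cycle} therefore shows that $\mathbf A_p^{\mathrm{Det}}$ is simple. Since $p$ is prime, the lemma also shows that its only possible proper subalgebra is $\{\varnothing,W_p\}$.

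Next we check that this two-element set is a $\Vcl$-algebra, using the last clause of Lemma~\ref{lem:cycle}. We need $f_D(\varnothing)=\varnothing$ and $f_D(W_p)=W_p$ for every coalition $D$.
\begin{itemize}
\item The first identity is the \MCL\ equation $f_C(0)=0$, which holds because available profiles have nonempty outcomes.
\item The second holds because every state has an available full profile, for instance $(0,0)$. Its restriction to $D$ is an available $D$-profile, and its outcome set is trivially contained in $W_p$.
\end{itemize}
Hence the two-element subalgebra belongs to $\Vcl$.

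It remains to see that $\mathbf A_p^{\mathrm{Det}}\notin\Vcl$. The valuation $V(p)=\{s\}$, $V(q)=\{s+2\}$ given before the statement refutes the Det instance $[N](p\vee q)\to([N]p\vee[N]q)$ at $s$. This uses $p\ge3$, so that $s\ne s+2$, and this is the only place the restriction to $p\ge3$ enters. Read algebraically, this Det equation fails in the complex algebra. Lemma~\ref{lem:finite-cover} now yields that $C_p^{\mathrm{Det}}$ is a coatom of $\lattice$.

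For the type of the coatom, we argue as follows.
\begin{itemize}
\item \textbf{Ser.} Every instance $[C]\top$ is valid on $G_p^{\mathrm{Det}}$ by seriality. It is also in \CL, so it lies in $C_p^{\mathrm{Det}}$.
\item \textbf{Det fails.} $C_p^{\mathrm{Det}}$ omits the refuted Det instance, since that instance is not in $\Th(\mathbf A_p^{\mathrm{Det}})$.
\item \textbf{IA.} We need every IA instance to be valid on the frame. The paper warns that action-level independence alone is not taken for granted, so we verify it through the semantics. Suppose $\sigma\in U^C$ witnesses $[C]\varphi$ and $\tau\in U^D$ witnesses $[D]\psi$, with $C\cap D=\varnothing$. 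Every full profile extending $\sigma\cup\tau$ extends both $\sigma$ and $\tau$. Hence $O_s^{C\cup D}(\sigma\cup\tau)\subseteq O_s^C(\sigma)\cap O_s^D(\tau)$, and this set is contained in the truth set of $\varphi\wedge\psi$.
\end{itemize}

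The point needing care in the IA step is that $O_s^{C\cup D}(\sigma\cup\tau)$ is nonempty. Availability of $\sigma$ forces all dummy coordinates in $C$ to be $0$, and likewise for $\tau$ and $D$. Setting the remaining dummies to $0$ and choosing arbitrary $a,b$ coordinates where they are unconstrained gives an available profile, because all four $(a,b)$-pairs are available. So independence does imply the IA scheme here, and $C_p^{\mathrm{Det}}$ contains IA.
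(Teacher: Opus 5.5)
Your proof is correct and follows the paper's own argument. The shift $f_b$ feeds Lemma~\ref{lem:cycle}, seriality makes the two-element subalgebra a $\CL$-algebra, the countervaluation places $\mathbf A_p^{\mathrm{Det}}$ outside $\Vcl$, and Lemma~\ref{lem:finite-cover} together with the verified scheme validities gives the coatom and its type. Your explicit semantic check of IA is sound but not needed, because the paper's caveat concerns the converse direction: a valid scheme need not yield the action-level condition, whereas action-level independence implies the IA scheme by ordinary soundness.
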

\begin{proof}
	Equation~\eqref{eq:det-shift} establishes the cyclic-shift
	hypothesis of Lemma~\ref{lem:cycle}. As the game is serial, the
	only proper subalgebra is a two-element $\CL$-algebra. The
	countervaluation above shows that
	$\mathbf A_p^{\mathrm{Det}}\notin\Vcl$. Lemma~\ref{lem:finite-cover}
	provides the asserted coatom, and the verified game conditions
	determine its type.
\end{proof}

\subsection{Simultaneous independence and determinism defects}

\begin{construction}[Mixed defect]\label{con:mix}
	For each prime $p\ge5$, take $U=\{0,1,2,3\}$ and retain the
	six available pairs from \eqref{eq:ia-table}. In addition, allow
	\begin{equation}\label{eq:mix-add}
		(3,1)\longmapsto\{s+1\},\qquad
		(3,3)\longmapsto\{s+2,s+3\}.
	\end{equation}
	Write $G_p^{\mathrm{Mix}}$ and $\mathbf A_p^{\mathrm{Mix}}$
	for the resulting game and its complex algebra.
\end{construction}

Again, choosing action $0$ allows agent $a$ to guarantee $s+1$.
Every other available action of $a$, including action $3$, has
$s+1$ among its aggregate outcomes. Therefore
\begin{equation}\label{eq:mix-shift}
	f_a(X)=\{s\in W_p:s+1\in X\}.
\end{equation}
Seriality is immediate. The same valuation
$V(p)=\{s+1\}$, $V(q)=\{s\}$ as in the independence-defect
game refutes IA, since the incompatible choices $a=0$ and $b=0$
remain available separately, while no full profile can guarantee
$p\wedge q$. For Det, use a different valuation with
$V(p)=\{s+2\}$ and $V(q)=\{s+3\}$. The profile $(3,3)$
guarantees $p\vee q$, but no available full profile guarantees
either disjunct: $(3,3)$ has both outcomes, and all remaining
profiles lead to $s$ or $s+1$. The four states
$s,s+1,s+2,s+3$ are distinct because $p\ge5$.
Hence both IA and Det fail on $G_p^{\mathrm{Mix}}$.

\begin{proposition}\label{prop:mix-family}
	For every prime $p\ge5$,
	\[
	C_p^{\mathrm{Mix}}=\CL\cap\Th(\mathbf A_p^{\mathrm{Mix}})
	\]
	is a coatom of $\lattice$ containing Ser but neither IA nor Det.
\end{proposition}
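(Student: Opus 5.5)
The proof follows the same pattern as Propositions~\ref{prop:ia-family} and~\ref{prop:det-family}: check the hypotheses of Lemma~\ref{lem:finite-cover} for $\mathbf A_p^{\mathrm{Mix}}$ through Lemma~\ref{lem:cycle}, and then read off the type of the coatom from the game.

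\textbf{Step 1: the cyclic shift.} First I confirm \eqref{eq:mix-shift}, since this is where the new profiles could interfere. The available actions of $a$ are $0,1,2,3$.
\begin{itemize}[leftmargin=2.5em]
\item Action $0$ has the single profile $(0,1)$, so its aggregate outcome is exactly $\{s+1\}$.
\item Action $1$ has aggregate $\{s,s+1\}$, coming from $(1,0),(1,1),(1,2)$.
\item Action $2$ has aggregate $\{s,s+1\}$, coming from $(2,1),(2,2)$.
\item Action $3$ has aggregate $\{s+1,s+2,s+3\}$, coming from $(3,1),(3,3)$.
\end{itemize}
Every available action of $a$ thus has $s+1$ among its outcomes, so any $X$ it ensures contains $s+1$. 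Conversely, action $0$ ensures every $X\ni s+1$. Hence $f_a(X)=\{s:s+1\in X\}$.

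\textbf{Step 2: simplicity and subalgebras.} Since $p$ is prime, Lemma~\ref{lem:cycle} shows that $\mathbf A_p^{\mathrm{Mix}}$ is simple and that its only proper subalgebra is $\{\varnothing,W_p\}$.

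Seriality holds at every state. With the semantic clause \eqref{eq:semantics}, it gives $f_C(1)=1$ for every $C$: the empty coalition is available, and every nonempty $C$ has some available partial profile whose outcomes lie in $W_p$. Live gives $f_C(0)=0$. By the last clause of Lemma~\ref{lem:cycle}, the two-element subalgebra is therefore a $\CL$-algebra.

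\textbf{Step 3: the algebra is not a $\CL$-algebra.} The countervaluations displayed before the statement already show this. $V(p)=\{s+1\}$ and $V(q)=\{s\}$ refute IA at $s$. $V(p)=\{s+2\}$ and $V(q)=\{s+3\}$ refute the Det instance $[N](p\vee q)\to([N]p\vee[N]q)$ at $s$.

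Two points in the Det refutation need care.
\begin{itemize}[leftmargin=2.5em]
\item The remaining profiles have outcomes only in $\{s,s+1\}$, which is disjoint from $\{s+2,s+3\}$ because $p\ge5$.
\item A full profile $(3,3)$ with the dummy action $0$ at the other agents is available, so it is a legitimate $N$-profile.
\end{itemize}
Hence $\mathbf A_p^{\mathrm{Mix}}\in\Vmcl\setminus\Vcl$.

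\textbf{Step 4: conclusion.} Lemma~\ref{lem:finite-cover} now shows that $C_p^{\mathrm{Mix}}$ is a coatom.

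It contains Ser, because Ser lies in $\CL$ and is valid on the serial game. Its theorems are valid in $\mathbf A_p^{\mathrm{Mix}}$, which refutes an IA instance and a Det instance, so neither scheme belongs to $C_p^{\mathrm{Mix}}$.

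The only genuinely new verification is Step 1: the added action $3$ must not give $a$ a second way to force a set that avoids $s+1$. The profile $(3,1)\mapsto\{s+1\}$ is included precisely to rule this out, so the computation of $f_a$ is the point to check.
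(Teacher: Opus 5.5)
Your proposal is correct and takes the same route as the paper's proof. Both use the cyclic shift \eqref{eq:mix-shift} together with seriality to invoke Lemma~\ref{lem:cycle}, the two countervaluations to place $\mathbf A_p^{\mathrm{Mix}}$ outside $\Vcl$, and then Lemma~\ref{lem:finite-cover} plus the verified scheme validities; your explicit aggregate-outcome check for action $3$ of $a$, and your note on the role of the profile $(3,1)$, spell out a detail the paper leaves implicit.
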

\begin{proof}
	Equation~\eqref{eq:mix-shift}, together with seriality, shows
	that the finite algebra is simple and has only the two-element
	$\CL$-algebra as a proper subalgebra. It lies outside $\Vcl$
	by either countervaluation above. Apply
	Lemma~\ref{lem:finite-cover} and then the verified scheme
	validities.
\end{proof}

\subsection{Term descriptions and distinctness}

The strategic presentations have a common cyclic Boolean core.
Write
\[
g(X)=\{s\in W_p:s+1\in X\}
\]
and $g^kX$ for the $k$-fold iterate of $g$ on $X\subseteq W_p$.
The operations of the three complex algebras are expressed by the
following Boolean terms in $X$ and its cyclic translates.
\begin{table}[htbp]
	\centering
	\small
	\setlength{\tabcolsep}{5pt}
	\caption{Coalition operations as terms in the cyclic Boolean algebra.}
	\label{tab:operations}
	\begin{tabular}{@{}lccc@{}}
		\toprule
		Operation & IA family & Det family & Mix family\\
		\midrule
		$f_\varnothing(X)$ & $X\cap gX$ & $X\cap gX\cap g^2X$
		& $X\cap gX\cap g^2X\cap g^3X$\\[2pt]
		$f_a(X)$ & $gX$ & $gX$ & $gX$\\[2pt]
		$f_b(X)$ & $X$ & $gX$ & $X\cup(g^2X\cap g^3X)$\\[2pt]
		$f_{\{a,b\}}(X)$ & $X\cup gX$ & $gX\cup(X\cap g^2X)$
		& $X\cup gX\cup(g^2X\cap g^3X)$\\
		\bottomrule
	\end{tabular}
\end{table}

\begin{proposition}\label{prop:term-description}
	For each family and every prime in its stated range,
	Table~\ref{tab:operations} gives its complex-algebra operations.
	For additional dummy agents, $f_C=f_{C\cap\{a,b\}}$ for all
	$C\subseteq N$. Consequently, each algebra is term-equivalent
	to its Boolean reduct expanded by the cyclic shift $g$.
\end{proposition}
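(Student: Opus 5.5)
The plan is to compute each operation directly from the complex-algebra clause \eqref{eq:complex}, coalition by coalition. The key observation is that all outcome sets are built from translates $s+k$. So for a partial profile $\sigma$ at $s$, the aggregate outcome set $O_s^C(\sigma)$ has the form $s+K_\sigma$ for a fixed finite offset set $K_\sigma\subseteq\{0,1,2,3\}$, independent of $s$. Hence $\sigma$ ensures $X$ exactly when $s\in\bigcap_{k\in K_\sigma}g^kX$, where $g^0X=X$. It follows that
\[
f_C(X)=\bigcup_{\sigma\text{ available}}\ \bigcap_{k\in K_\sigma}g^kX,
\]
and each table entry is read off by listing the offset sets and discarding those that are supersets of others, since a larger offset set gives a smaller intersection.

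For the dummy-agent clause, every available full profile has action $0$ at each coordinate outside $\{a,b\}$. So a partial profile $\sigma\in U^C$ is available only if it assigns $0$ to $C\setminus\{a,b\}$, and then $O_s^C(\sigma)=O_s^{C\cap\{a,b\}}(\sigma|_{C\cap\{a,b\}})$. This gives $f_C=f_{C\cap\{a,b\}}$, so only the four coalitions $\varnothing,\{a\},\{b\},\{a,b\}$ need checking.

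The offset sets for each family are as follows.
\begin{itemize}
\item \emph{IA family.} For $\varnothing$ the offset set is $\{0,1\}$. For agent $a$: action $0$ gives $\{1\}$, and actions $1,2$ give $\{0,1\}$. For agent $b$: action $0$ gives $\{0\}$, action $1$ gives $\{0,1\}$, and action $2$ gives $\{0,1\}$. For $\{a,b\}$ the profiles give $\{0\}$ and $\{1\}$. Minimizing yields $X\cap gX$, $gX$, $X$, and $X\cup gX$.
\item \emph{Det family.} For $\varnothing$ the offset set is $\{0,1,2\}$. Agent $a$ has $\{1\}$ and $\{0,1,2\}$, and likewise agent $b$. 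For $\{a,b\}$ the profiles give $\{1\}$ and $\{0,2\}$. This yields $gX\cup(X\cap g^2X)$ for the coalition $\{a,b\}$.
\item \emph{Mix family.} For $\varnothing$ the offset set is $\{0,1,2,3\}$. Agent $a$: action $3$ gives $\{1,2,3\}$, which is absorbed by action $0$'s $\{1\}$. Agent $b$: action $0$ gives $\{0\}$, action $1$ gives $\{0,1\}$, action $2$ gives $\{0,1\}$, and action $3$ gives $\{2,3\}$. For $\{a,b\}$ the offset sets are $\{0\}$, $\{1\}$, and $\{2,3\}$.
\end{itemize}
These match Table~\ref{tab:operations}. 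The prime bounds $p\ge3$ and $p\ge5$ only ensure that the listed translates are distinct; the formulas themselves do not depend on this.

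Term-equivalence then follows. Every $f_C$ is a Boolean term in $g$ by the table, and conversely $g=f_a$ in the IA and Mix families and $g=f_b$ in the Det family. The main obstacle is simply bookkeeping the aggregate outcome sets for each partial profile without missing an absorbed or unabsorbed case, especially for agent $b$ in the Mix family.
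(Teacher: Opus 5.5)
Your proposal is correct and follows essentially the same route as the paper. Both arguments compute the aggregate outcome set of each available partial profile, read off $f_C$ as the union of the corresponding intersections of translates, reduce dummy coalitions via the forced action $0$, and obtain term-equivalence from the fact that $g$ is itself a coalition operation; your offset-set bookkeeping with absorption of supersets just systematizes the paper's case-by-case computation (and taking $g=f_b$ in the Det family is fine, since there $f_a=f_b=g$).
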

\begin{proof}
	For the empty coalition there is only one partial profile, so
	$f_\varnothing(X)$ holds at $s$ precisely when $X$ contains all
	outcomes of all available full profiles at $s$. These total outcome
	sets are, respectively, $\{s,s+1\}$, $\{s,s+1,s+2\}$, and
	$\{s,s+1,s+2,s+3\}$, giving the first row of the table.
	
	The formulas $f_a(X)=gX$ in all three families follow from
	\eqref{eq:ia-shift}, \eqref{eq:det-shift} (with the symmetric
	calculation for $a$), and \eqref{eq:mix-shift}. In the IA family,
	$b=0$ guarantees $s$, and any other available $b$-action has $s$
	among its outcomes; thus $f_b(X)=X$. In the Det family, either
	agent can guarantee $s+1$, giving $f_b(X)=gX$. In the Mix family,
	$b=0$ guarantees $s$, while $b=3$ has the unique available
	completion $(3,3)$ with outcome set $\{s+2,s+3\}$.
	The other available actions of $b$ introduce no additional
	possibilities; hence $f_b(X)=X\cup(g^2X\cap g^3X)$.
	
	For the grand coalition $\{a,b\}$, each available pair is a
	complete choice for the two active agents. In the IA family it can
	guarantee either $s$ or $s+1$; in the Det family, either $s+1$
	or the set $\{s,s+2\}$; and in the Mix family, either $s$,
	$s+1$, or the set $\{s+2,s+3\}$. This yields the final row.
	Since all available full profiles fix every dummy coordinate at
	$0$, adjoining or removing dummy agents from a coalition does not
	change its enforceable outcome sets. Thus
	$f_C=f_{C\cap\{a,b\}}$. Every coalition operation is therefore
	a term in the Boolean operations and $g$, while $g=f_a$ in all
	three families. The two signatures are term-equivalent.
\end{proof}

The cyclic Boolean algebra with shift is a known rotational
structure; see \cite{CzedliNagy2013}. What is specific to the
present construction is the choice of coalition operations and their
realization by finite games, producing three different patterns of
relative covers of $\Vcl$.

\begin{lemma}[Distinctness of coatoms with finite critical witnesses]\label{lem:distinct}
	Let $\mathbf A,\mathbf B\in\Vmcl\setminus\Vcl$ be finite simple
	algebras whose proper subalgebras belong to $\Vcl$. If
	\[
	\CL\cap\Th(\mathbf A)=\CL\cap\Th(\mathbf B),
	\]
	then $\mathbf A\cong\mathbf B$.
\end{lemma}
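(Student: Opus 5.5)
The plan is to pass to varieties and reuse the J\'onsson-lemma argument from the proof of Lemma~\ref{lem:finite-cover}. Put $\mathcal W_{\mathbf A}=\Var(\Vcl\cup\{\mathbf A\})$ and $\mathcal W_{\mathbf B}=\Var(\Vcl\cup\{\mathbf B\})$. The final computation in the proof of Lemma~\ref{lem:finite-cover} gives $\Th(\mathcal W_{\mathbf A})=\CL\cap\Th(\mathbf A)$, and similarly for $\mathbf B$. The hypothesis therefore says $\Th(\mathcal W_{\mathbf A})=\Th(\mathcal W_{\mathbf B})$. By the order-duality of Proposition~\ref{prop:duality}, and since varieties are determined by their theorem sets, we get $\mathcal W_{\mathbf A}=\mathcal W_{\mathbf B}$.

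Next, $\mathbf B$ is simple and nontrivial, hence subdirectly irreducible. It lies in $\mathcal W_{\mathbf A}$ but not in $\Vcl$. The argument of Lemma~\ref{lem:finite-cover} now applies with $\mathcal V=\mathcal W_{\mathbf A}$ and $\mathbf C=\mathbf B$:
\begin{itemize}[leftmargin=2em]
\item congruence distributivity and J\'onsson's lemma give $\mathbf B\in\mathbf H\mathbf S\mathbf P_u(\Vcl\cup\{\mathbf A\})$;
\item the ultrafilter must concentrate on the copies of $\mathbf A$, since otherwise $\mathbf B\in\Vcl$;
\item finiteness of $\mathbf A$ collapses that ultrapower to $\mathbf A$, so $\mathbf B\in\mathbf H\mathbf S(\mathbf A)$;
\item a proper subalgebra of $\mathbf A$ would put $\mathbf B$ in $\Vcl$, so the subalgebra is all of $\mathbf A$;
\item simplicity of $\mathbf A$ and nontriviality of $\mathbf B$ force the homomorphism to be injective.
\end{itemize}
Hence $\mathbf B\cong\mathbf A$. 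That paragraph of Lemma~\ref{lem:finite-cover} already shows that every subdirectly irreducible member of $\mathcal W_{\mathbf A}\setminus\Vcl$ is isomorphic to $\mathbf A$, so I would cite it rather than rederive it.

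The only point needing care is extracting that intermediate claim cleanly, since the lemma states only the cover conclusion. I will restate it explicitly: every subdirectly irreducible $\mathbf C\in\mathcal W_{\mathbf A}\setminus\Vcl$ satisfies $\mathbf C\cong\mathbf A$. One could instead run the argument symmetrically to get $\mathbf A\in\mathbf{HS}(\mathbf B)$ and $\mathbf B\in\mathbf{HS}(\mathbf A)$ and compare cardinalities, but the one-directional argument above already yields the isomorphism.
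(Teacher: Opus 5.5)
Your proposal is correct and takes essentially the same approach as the paper: pass to $\Var(\Vcl\cup\{\mathbf A\})=\Var(\Vcl\cup\{\mathbf B\})$ by the order-duality, then reuse the J\'onsson-lemma and ultrapower argument of Lemma~\ref{lem:finite-cover}, using simplicity to force an isomorphism. The only difference is that you interchange the roles of $\mathbf A$ and $\mathbf B$: the paper derives $\mathbf A\in\mathbf H\mathbf S(\mathbf B)$, while you derive $\mathbf B\in\mathbf H\mathbf S(\mathbf A)$, which is immaterial by symmetry.
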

\begin{proof}
	By the order duality of Proposition~\ref{prop:duality}, equality
	of the two logic theories implies
	\[
	\Var(\Vcl\cup\{\mathbf A\})
	=\Var(\Vcl\cup\{\mathbf B\}).
	\]
	The simple algebra $\mathbf A$ is subdirectly irreducible.
	J\'onsson's lemma therefore gives
	$\mathbf A\in\mathbf H\mathbf S\mathbf P_u
	(\Vcl\cup\{\mathbf B\})$. The same ultrafilter argument as in
	Lemma~\ref{lem:finite-cover} excludes the $\Vcl$ factors and,
	since $\mathbf B$ is finite, yields
	$\mathbf A\in\mathbf H\mathbf S(\mathbf B)$.
	A proper subalgebra of $\mathbf B$ and any homomorphic image
	of it belong to $\Vcl$, whereas $\mathbf A\notin\Vcl$.
	Thus $\mathbf A$ must be a homomorphic image of $\mathbf B$
	itself. Simplicity of $\mathbf B$ and nontriviality of
	$\mathbf A$ force this homomorphism to be an isomorphism.
\end{proof}

\begin{theorem}[Three infinite families of serial coatoms]
	\label{thm:families}
	Suppose $|N|\ge2$. Each of the following three classes of coatoms
	of $\lattice$ contains a countably infinite family:
	\begin{enumerate}[label=\textup{(\roman*)},leftmargin=2.6em,itemsep=2pt]
		\item coatoms containing Ser and Det, but not IA;
		\item coatoms containing Ser and IA, but not Det;
		\item coatoms containing Ser, but neither IA nor Det.
	\end{enumerate}
	The three classes exhaust the possible strategic-condition types of
	coatoms in $[S,\CL]$, but the exhibited families are not claimed to
	exhaust all coatoms of any type. Every displayed coatom has a finite
	critical coalition-algebra witness; the existence of such a witness
	does not, by itself, imply finite axiomatizability of the coatom.
\end{theorem}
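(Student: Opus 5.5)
The plan is to assemble the three families from Propositions~\ref{prop:ia-family}, \ref{prop:det-family} and \ref{prop:mix-family}, and to obtain pairwise distinctness within each family from Lemma~\ref{lem:distinct}. For (i), take the coatoms $C_p^{\mathrm{IA}}$ with $p$ ranging over all primes; for (ii), take $C_p^{\mathrm{Det}}$ over primes $p\ge3$; for (iii), take $C_p^{\mathrm{Mix}}$ over primes $p\ge5$. The cited propositions already show that each of these is a coatom of $\lattice$ of the stated type. Each also comes with a finite critical witness, namely $\mathbf A_p^{\mathrm{IA}}$, $\mathbf A_p^{\mathrm{Det}}$ or $\mathbf A_p^{\mathrm{Mix}}$. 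These witnesses are simple, lie outside $\Vcl$, and have only the two-element $\CL$-algebra as a proper subalgebra.

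For distinctness, fix a family and two primes $p\ne q$ in its range. If the two coatoms coincide, Lemma~\ref{lem:distinct} gives an isomorphism of the corresponding complex algebras. Their Boolean reducts are $\setP(W_p)$ and $\setP(W_q)$, which have $2^p$ and $2^q$ elements, so $p=q$, a contradiction. Each family is therefore injectively indexed by an infinite set of primes, which makes it countably infinite. Countability is automatic in any case, since the language is countable and there are only countably many finite algebras up to isomorphism.

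For the exhaustiveness remark, let $L$ be a coatom in $[S,\CL]$. Then $\Ser\subseteq L$. If $L$ contained both IA and Det, we would have $\CL=B_{\{\Ser,\IA,\Det\}}\subseteq L$, contradicting properness. So $L$ omits IA, or Det, or both, and these are exactly the types (i)--(iii). The caveat about finite axiomatizability is a disclaimer rather than a claim, so it needs no proof.

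I expect no step to be a serious obstacle, since the substantive work is already done in the cited propositions and lemmas. The only points needing care are that the hypotheses of Lemma~\ref{lem:distinct} hold exactly as established in those propositions, and that the prime ranges are infinite.
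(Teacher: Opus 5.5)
Your proposal is correct and follows the paper's proof essentially step for step. You take the prime-indexed families from Propositions~\ref{prop:ia-family}, \ref{prop:det-family} and \ref{prop:mix-family}, and you get distinctness from Lemma~\ref{lem:distinct} together with the Boolean-reduct cardinalities $2^p\ne 2^q$. The trichotomy comes, as in the paper, from noting that a coatom in $[S,\CL]$ containing both IA and Det would equal $\CL$.
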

\begin{proof}
	Propositions~\ref{prop:ia-family}, \ref{prop:det-family}, and
	\ref{prop:mix-family} provide a prime-indexed family in each class.
	For distinct primes $p,q$, the corresponding critical algebras
	have Boolean reducts of cardinalities $2^p$ and $2^q$, hence are
	not isomorphic. Lemma~\ref{lem:distinct} shows that the associated
	coatoms are distinct. Coatoms from different listed classes are
	separated by the validity of IA and Det.
	
	Finally, every coatom in $[S,\CL]$ contains Ser. If it also
	contained both IA and Det, it would equal $\CL$, contrary to being
	a coatom. It therefore belongs to exactly one of the three listed
	classes. This gives the asserted trichotomy without implying an
	exhaustive classification within any class.
\end{proof}

\section{The nonserial boundary: an explicit splitting coatom}\label{sec:splitting}

Section~\ref{sec:finite-coatoms} provides infinitely many coatoms in
$[S,\CL]$, all containing seriality. The complementary part of the
upper boundary of the coalition-logic lattice is much more rigid: a
single coatom accounts for every intermediate logic that omits Ser.
Recall from \eqref{eq:derived-box} that
$e=[\varnothing]\top$ and $\Box\varphi=e\to[\varnothing]\varphi$.
Put
\[
B:=\MCL+\IA+\Det,
\qquad C_{\mathrm{ser}}:=B+\Box e.
\]

\begin{theorem}[Unique nonserial coatom]\label{thm:serial-splitting}
	The logic $C_{\mathrm{ser}}$ is a coatom of $\lattice$. More precisely,
	for every $L\in\lattice$,
	\begin{equation}\label{eq:split}
		L\subseteq C_{\mathrm{ser}}
		\quad\Longleftrightarrow\quad e\notin L.
	\end{equation}
	Consequently, $C_{\mathrm{ser}}$ is the unique coatom omitting Ser,
	and
	\[
	\Coat(\lattice)
	=\{C_{\mathrm{ser}}\}\ \dot\cup\ \Coat([S,\CL]).
	\]
\end{theorem}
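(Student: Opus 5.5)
The plan is to prove \eqref{eq:split} by purely syntactic means, using only the one-point dead-end frame $\bullet$. This frame has a single state and no available full profile, so every $[C]\chi$ is false there. First I will show that $C_{\mathrm{ser}}$ lies strictly below $\CL$. The inclusion $C_{\mathrm{ser}}\subseteq\CL$ holds because $e\in\CL$, so $\Box e\in\CL$ by the derived necessitation rule. For $e\notin C_{\mathrm{ser}}$, I check that $\bullet$ validates $B+\Box e$. The IA and Det instances are true on $\bullet$ because their antecedents are ability formulas, which are false there. The formula $\Box e$ is true because $\Box$ is vacuous at a dead end. Yet $e$ fails on $\bullet$. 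Lemma~\ref{lem:frame-soundness} then gives $e\notin C_{\mathrm{ser}}$. This settles the forward direction of \eqref{eq:split}: if $L\subseteq C_{\mathrm{ser}}$, then $e\notin L$.

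For the converse I use two syntactic facts. The first is the \emph{dead-end evaluation fact}. Since $\MCL\vdash[C]\chi\to e$, we get $\MCL\vdash\neg e\to([C]\chi\leftrightarrow\bot)$. Replacement of provable equivalents is available under propositional connectives, so for every $\varphi$ we obtain $\MCL\vdash\neg e\to(\varphi\leftrightarrow\varphi^\bullet)$. Here $\varphi^\bullet$ is the purely propositional formula obtained by replacing every maximal modal subformula of $\varphi$ by $\bot$. The second is the \emph{relativization fact}: if $\CL\vdash\varphi$, then $C_{\mathrm{ser}}\vdash e\to\varphi$.

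I would prove the relativization fact by induction on $\CL$-derivations, and I expect this to be the main step. The axioms of \MCL, IA and Det are already in $C_{\mathrm{ser}}$. For Ser, $e\to[C]\top$ holds by MC. MP is preserved under the fixed antecedent $e$. US is preserved because $e$ is variable-free. The only delicate case is CN. From $e\to\varphi$ we must derive $e\to([C]\psi\to[\varnothing]\varphi)$. By the derived necessitation rule and K we get $\Box(e\to\varphi)$, hence $\Box e\to\Box\varphi$. Since $\Box e\in C_{\mathrm{ser}}$, this yields $\Box\varphi$, that is, $e\to[\varnothing]\varphi$. Together with $\MCL\vdash[C]\psi\to e$ this gives the CN conclusion, even without the antecedent $e$. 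This is exactly where the axiom $\Box e$ is needed.

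Now suppose $e\notin L$ and $\varphi\in L$. First I claim that $\varphi^\bullet$ is a tautology. Otherwise some $0/1$ assignment falsifies it. Let $\varphi'$ be the substitution instance of $\varphi$ obtained by putting $\top$ or $\bot$ for each variable according to that assignment. Then $\varphi'\in L$ by US, and $\varphi'^\bullet$ is a false variable-free Boolean formula. The dead-end evaluation fact gives $\MCL\vdash\neg e\to\neg\varphi'$, and hence $e\in L$, a contradiction. So $\varphi^\bullet$ is a tautology, and the dead-end evaluation fact gives $\MCL\vdash\neg e\to\varphi$. On the other hand, $\varphi\in L\subseteq\CL$, so the relativization fact gives $C_{\mathrm{ser}}\vdash e\to\varphi$. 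Combining the two, $\varphi\in C_{\mathrm{ser}}$. This proves $L\subseteq C_{\mathrm{ser}}$ and completes \eqref{eq:split}.

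The consequences then follow directly. Suppose $C_{\mathrm{ser}}\subsetneq L\subseteq\CL$. Then \eqref{eq:split} forces $e\in L$. Every $\CL$-theorem $\varphi$ then satisfies $e\to\varphi\in C_{\mathrm{ser}}\subseteq L$ by the relativization fact, so $\varphi\in L$ and $L=\CL$. Hence $C_{\mathrm{ser}}$ is a coatom. If a coatom $L$ omits Ser, then $e\notin L$, so $L\subseteq C_{\mathrm{ser}}$ and maximality gives $L=C_{\mathrm{ser}}$. Any other coatom contains $e$, and since Ser is equivalent to $e$ over \MCL, it lies in $[S,\CL]$. The union is disjoint because $e\notin C_{\mathrm{ser}}$.
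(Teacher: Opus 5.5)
Your proof is correct. Its skeleton matches the paper's:
\begin{itemize}
\item the same induction on $\CL$-derivations giving $C_{\mathrm{ser}}\vdash e\to\varphi$, with $\Box e$ used exactly in the CN case;
\item the one-point dead-end frame to get $e\notin C_{\mathrm{ser}}$;
\item the same $\top/\bot$ substitution to extract $e$ from a theorem of $L$ that fails at a dead end.
\end{itemize}

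The genuine difference is how you handle dead-end states. The paper appeals twice to completeness of \MCL\ with respect to general concurrent games, citing Li--Ju. It does so once to get $\MCL\vdash\neg e\to\varphi$ for $\varphi$ valid on $F_{\mathrm{dead}}$, and once to get $\MCL\vdash\theta^\sigma\to e$. You replace both appeals by the syntactic fact $\MCL\vdash\neg e\to(\varphi\leftrightarrow\varphi^\bullet)$. That fact needs only $\MCL\vdash[C]\chi\to e$ (CN applied to $\top$) and a propositional tautology in which the maximal modal subformulas are treated as atoms.

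The paper's route produces, as an explicit intermediate step, the semantic description $C_{\mathrm{ser}}=\CL\cap\Log(F_{\mathrm{dead}})$. The cost is that the theorem then depends on an external completeness result. Your route is purely proof-theoretic apart from fixed-frame soundness. This fits the paper's own stance that the analysis of $\lattice$ should not presuppose game completeness.

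You also lose nothing by this choice. The formula $\varphi^\bullet$ is a tautology exactly when $F_{\mathrm{dead}}\models\varphi$, so your argument recovers the same identification of $C_{\mathrm{ser}}$ without invoking completeness.
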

\begin{proof}
	Write $T:=B+\Box e$. The proof has three steps.
	
	\emph{Step 1: relativization to serial states.}
	We show by induction on a $\CL$-derivation that
	\begin{equation}\label{eq:relative}
		\text{if }\varphi\in\CL,\text{ then }T\vdash e\to\varphi.
	\end{equation}
	Instances of the axioms of $\MCL$, IA, and Det belong to $B$,
	so their implications from $e$ are theorems of $T$.
	For Ser, $e\to[C]\top$ follows from MC. The claim is preserved
	under MP by propositional reasoning and under US because $e$ has
	no propositional variables. Suppose next that the last step of a
	$\CL$-derivation is an application of CN to a previously derived
	$\varphi$. By the induction hypothesis, $T\vdash e\to\varphi$.
	Derived necessitation and K for $\Box$ yield
	\[
	T\vdash\Box(e\to\varphi),\qquad
	T\vdash\Box e,\qquad
	T\vdash\Box\varphi.
	\]
	As $\Box\varphi=e\to[\varnothing]\varphi$ and
	$\MCL\vdash[C]\psi\to e$, we obtain
	$T\vdash[C]\psi\to[\varnothing]\varphi$. In particular,
	$T\vdash e\to([C]\psi\to[\varnothing]\varphi)$, which
	completes the CN step and proves \eqref{eq:relative}.
	
	\emph{Step 2: a dead-end representation of $T$.}
	Since $e$ is a Ser instance and $\CL$ is closed under derived
	$\Box$-necessitation, $\Box e\in\CL$. Together with
	$B\subseteq\CL$, this gives $T\subseteq\CL$.
	Let $F_{\mathrm{dead}}$ be the one-state frame with no available
	full profile. Every $[C]\varphi$ is false at its unique state,
	whereas $\Box e$ is true. Hence this frame validates $B$ and
	$\Box e$, but refutes $e$. Fixed-frame soundness shows that
	$T\subseteq\Log(F_{\mathrm{dead}})$ and $e\notin T$.
	We claim that
	\begin{equation}\label{eq:Tdead}
		T=\CL\cap\Log(F_{\mathrm{dead}}).
	\end{equation}
	The left-to-right inclusion is established. Conversely, let
	$\varphi\in\CL\cap\Log(F_{\mathrm{dead}})$. By
	\eqref{eq:relative}, $T\vdash e\to\varphi$.
	At any state where $e$ is false, no full action is available, so
	all formulas $[C]\psi$ are false. A structural induction shows
	that the truth of $\varphi$ at such a state agrees with its truth
	at the dead-end state under the same atomic assignment.
	Therefore $\neg e\to\varphi$ is valid on every general
	concurrent game model and hence is an $\MCL$-theorem by
	completeness \cite{LiJu2025}. Combining this theorem with
	$e\to\varphi$ proves $\varphi\in T$ and establishes
	\eqref{eq:Tdead}.
	
	\emph{Step 3: splitting and maximality.}
	Let $L\in\lattice$ and suppose $L\nsubseteq T$.
	Since $L\subseteq\CL$, choose $\theta\in L\setminus T$.
	By \eqref{eq:Tdead}, some valuation on $F_{\mathrm{dead}}$
	falsifies $\theta$. Define a uniform substitution $\sigma$
	on all propositional variables occurring in $\theta$: replace
	each variable by $\top$ or $\bot$ according to that valuation.
	The resulting variable-free formula $\theta^\sigma$ is false at
	every dead-end state, as is seen by induction on formulas.
	Consequently $\theta^\sigma\to e$ is valid on all general
	concurrent game models: at states satisfying $e$ its consequent
	holds, and at states satisfying $\neg e$ its antecedent fails.
	By $\MCL$-completeness,
	\[
	\MCL\vdash\theta^\sigma\to e.
	\]
	Since $L$ is closed under US and MP, $\theta\in L$ implies
	$e\in L$. Conversely, $e\notin T$ entails that $e\in L$
	is incompatible with $L\subseteq T$. This proves
	\eqref{eq:split}.
	
	Finally, $T\ne\CL$ because $e\notin T$, while
	$T+e=B+e=\CL$, since $e$ is equivalent over $\MCL$ to Ser.
	Every proper extension of $T$ within $\CL$ contains $e$ by
	the splitting property, and hence equals $\CL$. Thus $T$ is
	a coatom. If another coatom omitted Ser, \eqref{eq:split}
	would place it below $T$, forcing equality. The stated partition
	of the coatoms follows.
\end{proof}

\begin{remark}
	The pair $(S,C_{\mathrm{ser}})$ gives an explicit splitting of
	$\lattice$: for every $L\in\lattice$, either
	$L\subseteq C_{\mathrm{ser}}$ or $S\subseteq L$.
	This does not classify the coatoms containing Ser; the families
	in Theorem~\ref{thm:families} supply infinitely many such coatoms
	without asserting that these are exhaustive.
\end{remark}

\section{Compact meets in the coalition-logic lattice}\label{sec:meets}

We now turn from the upper boundary of $\lattice$ to its meet
operation. Although the eight basic extensions $B_\Sigma$ form a
Boolean cube under inclusion, their meets in the full lattice need
not remain in that cube. We determine exactly when such a meet is
compact, or equivalently finitely axiomatizable over $\MCL$.
Throughout this section,
$\Sigma,\Theta\subseteq\{\Ser,\IA,\Det\}$, and $B_\Sigma$
is as defined in \eqref{eq:basic}.

\begin{lemma}[Pure-defect gadgets]\label{lem:defects}
	For each $P\in\{\Ser,\IA,\Det\}$ there is a finite rooted
	general concurrent game frame $H_P$ with the following properties:
	\begin{enumerate}[label=\textup{(\roman*)},leftmargin=2.5em,itemsep=2pt]
		\item A fixed instance $\beta_P$ of the scheme $P$ can be
		falsified at the root by a suitable valuation.
		\item Both other strategic schemes are valid throughout $H_P$.
		\item Every proper successor-generated subframe of $H_P$
		validates all three schemes.
		\item Replacing the local game at the root by a single deterministic
		self-loop repairs the defect, without changing the local games
		at any other state or the atomic valuation.
	\end{enumerate}
	For $P\in\{\IA,\Det\}$, the root has an available full profile.
\end{lemma}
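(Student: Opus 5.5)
The plan is to construct the three gadgets explicitly. In each, the root is the only defective state, and every non-root state carries a single deterministic self-loop, i.e.\ one available full profile (all agents play $0$) with outcome the state itself. A state with a deterministic self-loop satisfies all three action-level conditions. Its generated subframe is just the loop, and there every $[C]$ acts as the identity operator, so Ser, IA and Det are valid there. This settles (iii): every proper generated subframe is a union of such loops, and the truth of formulas is preserved under generated subframes. It also settles (iv): replacing the root's game by a self-loop turns the whole frame into a disjoint family of loops. So the real work is at the root, for (i) and (ii).

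The three roots are as follows.
\begin{itemize}
\item $H_{\Ser}$: a single dead-end root. Then $e$ fails, and $e$ is an instance of Ser, giving (i). Every $[C]\varphi$ is false there, so the antecedents of IA and of Det are false, giving (ii).
\item $H_{\IA}$: the root $r$ with successors $u,v$, using the profiles $(0,0)\mapsto\{u\}$ and $(1,1)\mapsto\{v\}$ for agents $a,b$, with dummies at $0$. This is as in Proposition~\ref{prop:non-sublattice}, except that $u$ now also gets a self-loop. The valuation $p=\{u\}$, $q=\{v\}$ refutes $\beta_{\IA}=\iota(p,q)$, as shown there. Ser holds because profiles exist. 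Det holds at $r$ because outcomes are singletons; I will still verify the scheme directly, since the paper warns against inferring it from the action condition. If $C$ can guarantee $\varphi\vee\psi$ via some $\sigma$, then the grand coalition picks a single available profile extending $\sigma$. Its singleton outcome satisfies $\varphi$ or $\psi$. If every such extension satisfies $\varphi$, then $[C]\varphi$ holds; otherwise some extension satisfies $\psi$, and $[N]\psi$ holds.
\item $H_{\Det}$: the root has one available profile with outcome $\{u,v\}$. The valuation $p=\{u\}$, $q=\{v\}$ refutes $[N](p\vee q)\to([N]p\vee[N]q)$. Ser holds. IA holds because every coalition has exactly one available partial profile, namely the restriction of the single profile, so unions are available.
\end{itemize}

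The remaining checks are validity of IA and Det at the root for all valuations, done at the formula level. For $H_{\Det}$, every $[C]$ at the root means ``all outcomes satisfy''. So $[C]\varphi\wedge[D]\psi$ gives $[C\cup D](\varphi\wedge\psi)$ directly. For IA in $H_{\Ser}$, and for Det in $H_{\IA}$, the arguments above suffice. The case I expect to need the most care is Det for $H_{\IA}$. When $C\neq\varnothing$ contains $a$ or $b$, each available $C$-profile has a unique available completion, so $C$ guarantees a singleton outcome and the disjunction splits. When $C$ is $\varnothing$ or consists only of dummies, the guaranteed set is $\{u,v\}$, and I must show $[C]\varphi\vee[N]\psi$. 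If $v$ satisfies $\psi$, then $[N]\psi$ holds via $(1,1)$. Otherwise $v$ satisfies $\varphi$, and either $u$ satisfies $\varphi$, giving $[C]\varphi$, or $u$ satisfies $\psi$, giving $[N]\psi$ via $(0,0)$. Finally, the last clause of the lemma holds because both the IA and the Det roots have available profiles.
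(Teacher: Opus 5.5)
Your construction matches the paper's proof. You use a dead-end root for Ser, the root profiles $(0,0)\mapsto\{u\}$ and $(1,1)\mapsto\{v\}$ for IA, and a single profile with outcome $\{u,v\}$ for Det, with classical self-loops at $u,v$, the same refuting valuations, and the same observation that non-root generated subframes and the repaired frame consist of classical states. Your direct formula-level checks of Det and IA at the root are correct but not needed: the paper appeals to soundness of the action-level conditions, which is legitimate because its caution applies only to the converse inference (from scheme validity back to the action-level condition).
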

\begin{proof}
	Choose distinct agents $a,b\in N$; any other agents use dummy
	actions. For $P=\Ser$, take a one-state frame whose root has no
	available full profile, and put $\beta_{\Ser}=e$.
	For $P=\IA$, let the root have precisely two available deterministic
	profiles,
	\[
	(0,0)\longmapsto\{u\},\qquad
	(1,1)\longmapsto\{v\},
	\]
	where $u\ne v$ have classical deterministic self-loops. The formula
	\[
	\beta_{\IA}:=
	([a]p\wedge[b]q)\to[a,b](p\wedge q)
	\]
	is false at the root when $V(p)=\{u\}$ and $V(q)=\{v\}$.
	All local games are serial and deterministic, so the other two
	schemes are valid. For $P=\Det$, let the root have exactly one
	available full profile with outcome set $\{u,v\}$, again with
	classical self-loops at $u,v$. The instance
	\[
	\beta_{\Det}:=
	[N](p\vee q)\to([N]p\vee[N]q)
	\]
	is false under the same valuation; seriality and independence
	hold everywhere. In each construction, every generated subframe
	not containing the root consists of classical states. Replacing
	the offending root game by one deterministic self-loop makes that
	game classical and preserves everything stipulated in (iv).
\end{proof}

\paragraph{Full-action-history unfoldings.}
To compare models up to a prescribed modal depth, we record the
history unfolding of a general concurrent game model, in the form
used for tree-like games in \cite{ChenJu2026}. Given a pointed model
$(\mathcal M,r)$, a \emph{history} is a finite sequence
\[
h=(s_0,\pi_0,s_1,\ldots,\pi_{k-1},s_k),
\qquad s_0=r,\quad s_{i+1}\in O_{s_i}(\pi_i).
\]
Set $|h|=k$ and $\last(h)=s_k$, and let
$\Hist(\mathcal M,r)$ include all such histories, including $(r)$.
Write $h^\frown(\pi,t)$ for the extension of $h$ by $\pi,t$.
The unfolded model $\mathcal M^*$ has these histories as states,
uses the original action alphabet, and has outcome functions and
valuation
\begin{equation}\label{eq:unfolding}
	O_h^*(\pi)
	=\{h^\frown(\pi,t):t\in O_{\last(h)}(\pi)\},
	\qquad
	V^*(p)=\{h:\last(h)\in V(p)\}.
\end{equation}
The partial-profile outcomes are again the unions of the outcomes
of full-profile extensions. Consequently, for every coalition $C$
and partial profile $\sigma$,
\begin{equation}\label{eq:projection}
	\last[O_h^{*,C}(\sigma)]
	=O_{\last(h)}^C(\sigma).
\end{equation}
The projection need not be injective: different full profiles can
produce distinct history copies of the same successor state.
The equality, including preservation of empty outcome sets, is
what the truth-preservation argument needs.

\begin{lemma}[Truth preservation under unfolding]\label{lem:unfolding}
	For every history $h\in\Hist(\mathcal M,r)$ and formula $\varphi$,
	\[
	\mathcal M^*,h\models\varphi
	\quad\Longleftrightarrow\quad
	\mathcal M,\last(h)\models\varphi.
	\]
\end{lemma}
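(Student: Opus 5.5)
We argue by induction on $\varphi$, simultaneously for all histories $h$. Atoms hold by the definition of $V^*$, since $h\in V^*(p)$ iff $\last(h)\in V(p)$. The Boolean cases are immediate from the induction hypothesis. So the only substantive case is $\varphi=[C]\psi$, and there we will rely on the projection identity \eqref{eq:projection}.

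For the modal case, fix $h$ with $\last(h)=s$, and fix $\sigma\in U^C$. The key observation is that every element of $O_h^{*,C}(\sigma)$ has the form $h^\frown(\pi,t)$. Such an element is itself a history, so the induction hypothesis applies to it with $\last(h^\frown(\pi,t))=t$. Hence
\[
O_h^{*,C}(\sigma)\subseteq\llbracket\psi\rrbracket_{\mathcal M^*}
\quad\Longleftrightarrow\quad
\last\bigl[O_h^{*,C}(\sigma)\bigr]\subseteq\llbracket\psi\rrbracket_{\mathcal M}.
\]
Since $\last$ maps a set onto its image, emptiness is also preserved:
\[
O_h^{*,C}(\sigma)=\varnothing
\quad\Longleftrightarrow\quad
\last\bigl[O_h^{*,C}(\sigma)\bigr]=\varnothing.
\]
By \eqref{eq:projection}, the image $\last[O_h^{*,C}(\sigma)]$ equals $O_s^C(\sigma)$. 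Therefore $\sigma$ witnesses $[C]\psi$ at $h$ in $\mathcal M^*$ exactly when it witnesses $[C]\psi$ at $s$ in $\mathcal M$. Quantifying over $\sigma\in U^C$ (both models use the same alphabet $U$) and applying \eqref{eq:semantics} closes the case.

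The one point needing care is the identity \eqref{eq:projection} itself, so we would verify it from \eqref{eq:unfolding}:
\[
O_h^{*,C}(\sigma)=\bigcup_{\pi|_C=\sigma}\{h^\frown(\pi,t):t\in O_s(\pi)\},
\]
and applying $\last$ yields $\bigcup_{\pi|_C=\sigma}O_s(\pi)=O_s^C(\sigma)$. The empty coalition needs no separate treatment: the same formula applies with $\sigma$ the empty function, where every $\pi$ restricts to $\sigma$. Non-injectivity of $\last$ causes no trouble, because the argument uses only images and emptiness, never cardinalities.
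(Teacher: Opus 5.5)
Your proof is correct and follows the paper's argument: induction on $\varphi$ for all histories at once, with the modal case reduced via the projection identity \eqref{eq:projection} to the observation that the same partial profile $\sigma$ witnesses $[C]\psi$ on both sides. You also derive \eqref{eq:projection} explicitly from \eqref{eq:unfolding} and note that successor histories lie in $\Hist(\mathcal M,r)$, so the induction hypothesis applies to them; the paper leaves both points implicit.
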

\begin{proof}
	We argue by induction on $\varphi$. The atomic and Boolean cases
	follow from the definition of $V^*$ and the induction hypothesis.
	For $[C]\psi$, \eqref{eq:projection} preserves availability of
	each partial profile. By the induction hypothesis, all histories
	in its unfolded outcome set satisfy $\psi$ exactly when all their
	projected outcomes satisfy $\psi$. Thus the same partial profile
	witnesses $[C]\psi$ on either side, establishing the modal case.
\end{proof}

\begin{lemma}[Depth-locality of full-action unfoldings]
	\label{lem:depth}
	Let $(\mathcal M,r)$ and $(\mathcal M',r')$ be pointed general
	concurrent models with action alphabets $U,U'$, and let $n\ge1$.
	Suppose that there is a length- and prefix-preserving bijection
	$\eta$ between the two sets of histories of length at most $n$,
	sending $(r)$ to $(r')$. Assume further that:
	\begin{enumerate}[label=\textup{(\roman*)},leftmargin=2.5em,itemsep=2pt]
		\item Matched histories of length at most $n$ agree on all atoms.
		\item There are bijections $\rho_a:U\to U'$ for all $a\in N$,
		inducing a coordinatewise bijection
		$\rho:U^N\to(U')^N$, such that for each $|h|<n$ and
		full profile $\pi$,
		\[
		\eta[O_h^*(\pi)]
		=O_{\eta(h)}^{\prime *}(\rho(\pi)).
		\]
	\end{enumerate}
	Then $\mathcal M,r$ and $\mathcal M',r'$ agree on every formula
	of modal depth at most $n$.
\end{lemma}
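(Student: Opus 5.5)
By Lemma~\ref{lem:unfolding}, it suffices to compare the unfolded models: $\mathcal M,r$ and $\mathcal M^*,(r)$ satisfy the same formulas, and likewise $\mathcal M',r'$ and $\mathcal M'^*,(r')$. So I will work entirely in $\mathcal M^*$ and $\mathcal M'^*$. There I will prove a stronger, graded statement by induction on $k$, for $0\le k\le n$: for every history $h$ with $|h|\le n-k$ and every formula $\varphi$ with $\operatorname{md}(\varphi)\le k$,
\[
\mathcal M^*,h\models\varphi
\quad\Longleftrightarrow\quad
\mathcal M'^*,\eta(h)\models\varphi.
\]
Taking $h=(r)$ and $k=n$ then gives the lemma. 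The induction on $k$ will itself be carried out by an inner structural induction on $\varphi$.

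For the base case $k=0$, matched histories of length at most $n$ agree on atoms by hypothesis~(i). Boolean combinations follow directly. For a modal formula $[C]\psi$ with $\operatorname{md}(\psi)\le k-1$ and $|h|\le n-k<n$, the plan is to transfer witnesses through $\rho$.

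First, the restriction $\rho_C:=\prod_{a\in C}\rho_a$ is a bijection $U^C\to(U')^C$, and full profiles extending $\sigma$ correspond bijectively under $\rho$ to full profiles extending $\rho_C(\sigma)$. Taking unions of the equalities in~(ii) over these extensions gives
\[
\eta\bigl[O_h^{*,C}(\sigma)\bigr]=O_{\eta(h)}^{\prime*,C}(\rho_C(\sigma)).
\]
Since $\eta$ is a bijection, $\sigma$ is available at $h$ iff $\rho_C(\sigma)$ is available at $\eta(h)$. Every outcome history lies in $O_h^{*,C}(\sigma)$ and has length $|h|+1\le n-(k-1)$. The induction hypothesis at level $k-1$ therefore applies to $\psi$ at these histories and at their $\eta$-images, which are exactly the outcomes on the primed side. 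Hence $\sigma$ witnesses $[C]\psi$ at $h$ iff $\rho_C(\sigma)$ witnesses it at $\eta(h)$. Surjectivity of $\rho_C$ gives the converse direction, so every witness on the primed side arises this way.

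I expect the main obstacle to be the bookkeeping that keeps all invoked histories inside the domain of $\eta$, which consists of histories of length at most $n$. This is handled by the index $|h|\le n-k$, which ensures that $|h|<n$ whenever a modal step is taken, so hypothesis~(ii) is available. A second point to check is that unions over profile extensions commute with $\eta$. This holds because $\eta$ acts elementwise on sets of histories, so images preserve unions, and because emptiness is preserved by the bijection. That preservation of emptiness is exactly what makes availability transfer correctly.
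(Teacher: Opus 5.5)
Your proposal is correct and follows the paper's proof essentially step for step. Like the paper, you prove agreement at matched histories $h,\eta(h)$ for formulas of depth at most $n-|h|$, obtain $\eta[O_h^{*,C}(\sigma)]=O_{\eta(h)}^{\prime *,C}(\rho_C(\sigma))$ by taking unions in (ii), transfer availability and witnesses through the bijections $\eta$ and $\rho_C$, and pass to the original models via Lemma~\ref{lem:unfolding} at the root histories. The only difference is cosmetic: you use an outer induction on the depth bound $k$ with an inner structural induction, whereas the paper runs a single structural induction with the bound $n-|h|$ built into the claim.
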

\begin{proof}
	We prove, simultaneously for all matched histories $h,\eta(h)$,
	agreement on formulas of depth at most $n-|h|$, by structural
	induction on formulas. Atomic and Boolean cases are immediate.
	The derived $\Box\psi$ is an abbreviation for
	$e\to[\varnothing]\psi$, and its expanded modal depth is
	$1+\operatorname{md}(\psi)$, so it requires no separate case.
	
	For $[C]\psi$ at a history $h$ with $|h|<n$, let
	$\rho_C:U^C\to(U')^C$ be the restriction of the coordinatewise
	action renaming. Taking unions of the full-profile outcome
	identities in (ii) gives
	\[
	\eta[O_h^{*,C}(\sigma)]
	=O_{\eta(h)}^{\prime *,C}(\rho_C(\sigma))
	\qquad(\sigma\in U^C).
	\]
	Since $\eta$ is a bijection of the truncated history sets, this
	identity preserves nonemptiness of the partial-profile outcome
	sets. At each successor history, the induction hypothesis applies
	to $\psi$ whenever
	$\operatorname{md}([C]\psi)\le n-|h|$.
	Thus $\sigma$ is an available profile ensuring $\psi$ exactly
	when $\rho_C(\sigma)$ is. Surjectivity of $\rho_C$ treats all
	profiles on the other side. Hence the two unfolded models agree
	on $[C]\psi$ at matched histories. Apply
	Lemma~\ref{lem:unfolding} at the root histories to finish.
\end{proof}

\begin{theorem}[Exact compact-meet criterion]\label{thm:compact-meet}
	For $|N|\ge2$ and all
	$\Sigma,\Theta\subseteq\{\Ser,\IA,\Det\}$,
	\begin{equation}\label{eq:criterion}
		B_\Sigma\cap B_\Theta\text{ is compact in }\lattice
		\quad\Longleftrightarrow\quad
		\Sigma\subseteq\Theta\ \text{or}\ \Theta\subseteq\Sigma.
	\end{equation}
	In particular, the meet of two incomparable basic extensions
	is not finitely axiomatizable over $\MCL$.
\end{theorem}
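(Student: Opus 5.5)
The direction ``$\Leftarrow$'' is immediate. If $\Sigma\subseteq\Theta$, then $B_\Sigma\subseteq B_\Theta$, so $B_\Sigma\cap B_\Theta=B_\Sigma$. This meet is compact because each basic extension is $\Cn$ of finitely many scheme instances, as noted after \eqref{eq:basic}. By Proposition~\ref{prop:duality} the meet in $\lattice$ is $\CL\cap B_\Sigma\cap B_\Theta=B_\Sigma\cap B_\Theta$, so no further identification is needed. The substance lies in ``$\Rightarrow$''. I will prove the stronger statement that, for incomparable $\Sigma,\Theta$, no set $\Gamma$ of formulas of modal depth at most a fixed $n$ satisfies $\Cn(\Gamma)=B_\Sigma\cap B_\Theta$. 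Compactness would give a finite $\Gamma$, hence such a bound, so non-compactness follows.

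Fix $P\in\Sigma\setminus\Theta$ and $Q\in\Theta\setminus\Sigma$, and let $\beta_P,\beta_Q$ be the instances from Lemma~\ref{lem:defects}, with their variables renamed apart. Put $\Box^0\varphi=\varphi$ and $\Box^{m+1}\varphi=\Box\Box^m\varphi$. The separating formulas are
\[
\gamma_m:=\beta_P\vee\Box^m\beta_Q\qquad(m\ge0).
\]
Each $\gamma_m$ lies in $B_\Sigma$ because $\beta_P$ does. It also lies in $B_\Theta$, since $\beta_Q\in B_\Theta$ and derived $\Box$-necessitation yields $\Box^m\beta_Q$. Hence $\gamma_m$ belongs to the meet.

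Given $n$, I will build a frame $F_n$ as follows. Take the gadget $H_P$. Replace one classical self-loop state reachable from its root (or, if $P=\Ser$, give the dead root nothing and instead build the chain elsewhere) by a path of $m>n$ classical deterministic states. The far end of this path is the root of a copy of $H_Q$. Care is needed for $P=\Ser$: a dead root has no successors. In that case I swap the roles, so that the root is the $Q$-defect and the dead $\Ser$-gadget sits at the end of the path. This is possible because $P$ and $Q$ are distinct, so at most one of them is $\Ser$, and $\gamma_m$ can be written with either disjunct first. With the valuations from Lemma~\ref{lem:defects} at the two defects, $\gamma_{m'}$ fails at the root of $F_n$ for the appropriate $m'$ (the distance to the second defect). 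Hence $\gamma_{m'}$ is a member of the meet that is not valid on $F_n$.

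The key step, and the expected main obstacle, is showing that $F_n$ validates every formula of the meet of depth at most $n$. Fix a state $s$ of $F_n$. Because the two defects are more than $n$ steps apart, the depth-$n$ history tree from $s$ meets at most one defect. Let $F'$ be $F_n$ with that unseen defect repaired by a deterministic self-loop, using Lemma~\ref{lem:defects}(iv). Then $F'$ has only one defect, say the $P$ one. By Lemma~\ref{lem:defects}(ii),(iii), $F'$ validates all instances of the two schemes other than $P$, hence all of $\Theta$. By fixed-frame soundness (Lemma~\ref{lem:frame-soundness}), $F'$ validates $B_\Theta\supseteq B_\Sigma\cap B_\Theta$. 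Symmetrically, if the seen defect is $Q$, the repaired frame validates $B_\Sigma$. If neither defect is seen, either repair works.

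The local games of $F_n$ and $F'$ coincide on all states within distance $n$ of $s$. So the identity on histories of length at most $n$, with identity action renamings, satisfies the hypotheses of Lemma~\ref{lem:depth}. Hence $s$ satisfies the same depth-$\le n$ formulas, under any valuation, in both frames. The delicate points are exactly two. First, the repair must change only outcome functions beyond depth $n$, while keeping atoms and the history bijection intact. Second, seriality at the repaired state must not alter availability at shallower histories. Both are guaranteed by the locality in Lemma~\ref{lem:defects}(iv).

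Therefore $F_n\models\Gamma$ for any $\Gamma$ of depth at most $n$ contained in the meet. Lemma~\ref{lem:frame-soundness} then gives $F_n\models\Cn(\Gamma)$, so $\Cn(\Gamma)\ne B_\Sigma\cap B_\Theta$, because $\gamma_{m'}$ fails on $F_n$. This completes ``$\Rightarrow$''. Via Proposition~\ref{prop:duality} it also gives the stated failure of finite axiomatizability over $\MCL$.
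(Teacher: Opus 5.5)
Your proof is correct and follows essentially the paper's argument. Both separate two pure defects by a classical chain longer than $n$, use the witness $\beta\vee\Box^m\beta'$ from the meet, and combine repair of the unseen defect with Lemma~\ref{lem:depth} and fixed-frame soundness. Your role swap, putting the non-$\Ser$ defect at the root, matches the paper's normalization $Q\ne\Ser$, and you also get the bounded-depth strengthening of Corollary~\ref{cor:unbounded-depth} directly. The paper treats nonroot states through generated subframes rather than repair, which is a minor difference.

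One step should be made explicit. Lemma~\ref{lem:defects}(ii),(iii) are statements about $H_P$ itself, not your glued frame. To transfer them you need one further observation: either every local game of the repaired frame satisfies the relevant action-level conditions (as the paper argues), or validity of these depth-one schemes at a state depends only on its local game.
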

\begin{proof}
	The pure-defect frames from Lemma~\ref{lem:defects} show that
	none of the three strategic schemes follows from the other two.
	Indeed, if $P\in\Sigma\setminus\Theta$, the frame $H_P$
	validates $B_\Theta$ but refutes an instance of $P$.
	Together with the evident forward implication, this proves
	\begin{equation}\label{eq:basic-order}
		B_\Sigma\subseteq B_\Theta
		\quad\Longleftrightarrow\quad\Sigma\subseteq\Theta.
	\end{equation}
	For comparable $\Sigma,\Theta$, the meet is the smaller basic
	extension, which is compact by Proposition~\ref{prop:duality}.
	
	Now let $\Sigma,\Theta$ be incomparable. By interchanging their
	roles if necessary, we may choose
	\begin{equation}\label{eq:PQ}
		P\in\Sigma\setminus\Theta,
		\qquad Q\in\Theta\setminus\Sigma,
		\qquad Q\ne\Ser.
	\end{equation}
	Such a choice is always possible: the two nonempty, disjoint
	set differences cannot both consist solely of Ser. The condition
	$Q\ne\Ser$ ensures that the defect at the root has an available
	profile and hence a path to the later defect.
	
	For each $n\ge1$, construct a finite frame $F_n$ with common
	action alphabet $U=\{0,1\}$ and two distinct designated agents
	$a,b$. Write $\pi^{ij}$ for the full profile with coordinates
	$i,j$ at $a,b$ and coordinate $0$ at every other agent. Take
	pairwise distinct states
	\[
	x_0=r,\ x_1,\ldots,x_n=z,\ y,
	\]
	and, if $P\ne\Ser$, two further states $u,v$. Unless explicitly
	listed, a full profile is unavailable. At the root set
	\[
	\begin{array}{c|l}
		Q=\IA & O_r(\pi^{00})=\{x_1\},\quad
		O_r(\pi^{11})=\{y\}\\[2pt]
		Q=\Det& O_r(\pi^{00})=\{x_1,y\}.
	\end{array}
	\]
	At every intermediate state $x_i$ ($1\le i<n$), make only
	$\pi^{00}$ available, with outcome $\{x_{i+1}\}$. At $z$
	install the pure $P$-defect:
	\[
	\begin{array}{c|l}
		P=\Ser & \text{no available full profile}\\[2pt]
		P=\IA  & O_z(\pi^{00})=\{u\},\quad
		O_z(\pi^{11})=\{v\}\\[2pt]
		P=\Det & O_z(\pi^{00})=\{u,v\}.
	\end{array}
	\]
	At $y$ and at $u,v$ when present, only $\pi^{00}$ is available
	and yields the current state. This also covers $n=1$, when
	$x_1=z$ and no intermediate states occur. The state $z$ is
	reachable from $r$ by a path of length exactly $n$, and no
	nonroot state can reach $r$.
	
	The root has only a $Q$-defect, while $z$ has only a $P$-defect;
	all other local games are classical. Every nonroot generated
	subframe satisfies the conditions in $\Theta$, since
	$P\notin\Theta$. Soundness on those generated subframes and
	truth preservation under restriction imply
	\begin{equation}\label{eq:nonroot}
		F_n,s\models B_\Theta
		\qquad\text{for every }s\ne r,
		\text{under every valuation}.
	\end{equation}
	
	Choose instances $\beta_P,\beta_Q$ from
	Lemma~\ref{lem:defects}, renaming their propositional variables
	apart. (The Ser instance $\beta_{\Ser}=e$ has no variables.)
	A valuation can therefore falsify $\beta_Q$ at $r$ and
	$\beta_P$ at $z$ simultaneously. Define
	\begin{equation}\label{eq:gamma}
		\gamma_n:=\beta_Q\vee\Box^n\beta_P.
	\end{equation}
	Because $\beta_Q\in B_\Theta$ and $\beta_P\in B_\Sigma$,
	while derived $\Box$-necessitation is available in both logics,
	\[
	\gamma_n\in B_\Sigma\cap B_\Theta.
	\]
	However $\gamma_n$ is false at $r$ under the chosen valuation:
	$\beta_Q$ fails there, and $z$ is an $n$-step successor at
	which $\beta_P$ fails. Thus
	\[
	F_n\not\models\gamma_n.
	\]
	
	Let $F_n^+$ be obtained by replacing the local game at $z$ with
	the single deterministic self-loop
	$O_z(\pi^{00})=\{z\}$, leaving every other local game unchanged.
	The root still has only the $Q$-defect, and
	$Q\notin\Sigma$; all other states are now classical. Thus
	$F_n^+\models B_\Sigma$ by the frame conditions and
	Lemma~\ref{lem:frame-soundness}.
	Equip $F_n$ and $F_n^+$ with an arbitrary common valuation.
	Their full-action histories from $r$ coincide through length $n$:
	the only change is at $z$, which first occurs at depth $n$.
	The identity on those histories satisfies
	Lemma~\ref{lem:depth}, with identity action renamings.
	Consequently, for every $\theta\in B_\Sigma\cap B_\Theta$
	with $\operatorname{md}(\theta)<n$, validity on $F_n^+$ gives
	truth of $\theta$ at $r$ in $F_n$. Together with
	\eqref{eq:nonroot}, this establishes
	\begin{equation}\label{eq:low-depth}
		F_n\models
		\{\theta\in B_\Sigma\cap B_\Theta:
		\operatorname{md}(\theta)<n\}.
	\end{equation}
	
	Suppose, toward a contradiction, that
	$M:=B_\Sigma\cap B_\Theta$ is compact. By
	Proposition~\ref{prop:duality}, write $M=\Cn(\Gamma)$ for
	some finite $\Gamma\subseteq M$. Choose $n\ge1$ so that
	$\operatorname{md}(\varphi)<n$ for every $\varphi\in\Gamma$;
	this also covers $\Gamma=\varnothing$.
	Equation~\eqref{eq:low-depth} gives $F_n\models\Gamma$,
	so fixed-frame soundness yields
	$F_n\models\Cn(\Gamma)=M$. This contradicts
	$\gamma_n\in M$ and $F_n\not\models\gamma_n$.
	Therefore $M$ is not compact.
\end{proof}

\begin{corollary}[No uniformly bounded-depth axiomatization]
	\label{cor:unbounded-depth}
	If $\Sigma,\Theta$ are incomparable and
	$M=B_\Sigma\cap B_\Theta$, then for every integer $d\ge0$,
	\[
	\Cn\bigl(\{\theta\in M:\operatorname{md}(\theta)\le d\}\bigr)
	\subsetneq M.
	\]
	Consequently, every set of axioms generating $M$, even an
	infinite one, has unbounded modal depth.
\end{corollary}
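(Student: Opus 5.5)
The plan is to reuse the frames $F_n$ and the formulas $\gamma_n$ from the proof of Theorem~\ref{thm:compact-meet}. The only new ingredient is that the low-depth validity statement \eqref{eq:low-depth} already covers the \emph{entire} bounded-depth fragment of $M$, not just a finite subset of it.

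Fix $d\ge0$ and put $\Gamma_d:=\{\theta\in M:\operatorname{md}(\theta)\le d\}$. Choose $P,Q$ as in \eqref{eq:PQ}, which is possible because $\Sigma,\Theta$ are incomparable. Then take $n:=d+1$, so that every member of $\Gamma_d$ has modal depth $<n$. By \eqref{eq:low-depth} we get $F_n\models\Gamma_d$. Lemma~\ref{lem:frame-soundness} then gives $F_n\models\Cn(\Gamma_d)$. On the other hand, $\gamma_n\in M$ while $F_n\not\models\gamma_n$. Hence $\gamma_n\in M\setminus\Cn(\Gamma_d)$. The inclusion $\Cn(\Gamma_d)\subseteq M$ holds because $M\in\lattice$ is closed under MP, US, and CN and contains $\Gamma_d$. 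This proves the strict inclusion $\Cn(\Gamma_d)\subsetneq M$.

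For the consequence, suppose $\Delta$ generates $M$, i.e.\ $\Cn(\Delta)=M$, and suppose the modal depths of the members of $\Delta$ are bounded by some $d$. Since $\Delta\subseteq M$, we have $\Delta\subseteq\Gamma_d$. Monotonicity of $\Cn$ then gives $M=\Cn(\Delta)\subseteq\Cn(\Gamma_d)\subsetneq M$, which is a contradiction. Therefore every generating set of axioms for $M$ has unbounded modal depth.

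The only point needing care is the proof of \eqref{eq:low-depth}, and it should be re-read to confirm it never used finiteness of the axiom set. It rests on two facts. First, each $\theta\in M$ of depth $<n$ is valid on $F_n^+$ (since $F_n^+\models B_\Sigma$) and therefore true at $r$ in $F_n$ by Lemma~\ref{lem:depth}. Second, every $\theta\in M$ holds at nonroot states by \eqref{eq:nonroot}. Both facts hold pointwise for each formula and each valuation, so they apply to the infinite set $\Gamma_d$ directly. One further check: Lemma~\ref{lem:depth} requires agreement on histories up to length $n$, which holds because $z$ first occurs at length $n$ and only its local game changes. This yields agreement at $r$ for depth at most $n-1$.
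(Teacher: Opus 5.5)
Your proposal is correct and follows the paper's proof essentially verbatim. You choose $n>d$ (taking $n=d+1$), use \eqref{eq:low-depth} and Lemma~\ref{lem:frame-soundness} to get $F_n\models\Cn(\Gamma_d)$, let $\gamma_n\in M$ witness the strict inclusion, and obtain the consequence by monotonicity of $\Cn$. Your added checks, that $\Cn(\Gamma_d)\subseteq M$ and that \eqref{eq:low-depth} holds formula by formula without using finiteness, are accurate details the paper leaves implicit.
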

\begin{proof}
	Choose $n>d$ in the construction for
	Theorem~\ref{thm:compact-meet}. By \eqref{eq:low-depth},
	$F_n$ validates every theorem of $M$ of depth at most $d$.
	By Lemma~\ref{lem:frame-soundness}, it also validates the
	$\Cn$-closure of those theorems. But $F_n$ refutes
	$\gamma_n\in M$, so that closure is a proper subset of $M$.
	If $M=\Cn(\Gamma)$ with all formula depths in $\Gamma$
	bounded by $d$, this strict inclusion would be impossible.
\end{proof}

\begin{lemma}[Compactness in a principal interval]
	\label{lem:interval-compact}
	Let $L$ be a complete lattice and $b$ a compact element of $L$.
	For every $x\in[b,\top]$, compactness of $x$ in $[b,\top]$
	is equivalent to compactness in $L$. If $L$ is algebraic, then
	$[b,\top]$ is algebraic as well.
\end{lemma}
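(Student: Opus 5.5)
The plan is to compare directed joins in the interval $[b,\top]$ with directed joins in $L$. I will work with the directed-join formulation of compactness: $x$ is compact iff whenever $x\le\bigvee D$ for a directed $D$, then $x\le d$ for some $d\in D$. The finite-subset formulation is equivalent, since arbitrary joins are directed joins of finite joins. The basic observation is this. For a nonempty subset $D\subseteq[b,\top]$, its join computed in $[b,\top]$ coincides with its join in $L$, because the $L$-join lies above $b$. For the empty family, the interval join is $b$ while the $L$-join is the bottom of $L$. This discrepancy is exactly where the compactness of $b$ will be needed.

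\emph{Compact in $L$ implies compact in $[b,\top]$.} Let $x\in[b,\top]$ be compact in $L$, and let $D\subseteq[b,\top]$ be directed with $x\le\bigvee^{[b,\top]}D$. Here $D$ is nonempty, since directed sets are nonempty; if the finite-subset version is used, I will treat the empty family separately, as then $x\le b$ gives $x=b\le b$. The interval join equals the $L$-join, so compactness in $L$ yields some $d\in D$ with $x\le d$.

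\emph{Compact in $[b,\top]$ implies compact in $L$.} Let $x$ be compact in $[b,\top]$, and let $D\subseteq L$ be directed with $x\le\bigvee D$. Put $D_b=\{d\vee b:d\in D\}$. This set is directed, lies in $[b,\top]$, and has the same join $\bigvee D\vee b=\bigvee D$, because $b\le x\le\bigvee D$. Compactness in the interval gives $d\in D$ with $x\le d\vee b$. This is not yet $x\le d$, and removing the $b$ is the main obstacle. Since $b$ is compact in $L$ and $b\le\bigvee D$, there is $d'\in D$ with $b\le d'$. Directedness then gives $d''\in D$ above both $d$ and $d'$, so $x\le d\vee b\le d''$.

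\emph{Algebraicity.} Take $x\in[b,\top]$. Since $L$ is algebraic, $x=\bigvee\{k\le x: k\text{ compact in }L\}$, and hence also
\[
x=\bigvee\{k\vee b:k\le x,\ k\text{ compact in }L\}.
\]
This family is nonempty (take $k=\bot$), so its join in the interval agrees with its join in $L$. Each $k\vee b$ is compact in $L$, being a finite join of compact elements. It lies in $[b,\top]$, so by the equivalence already proved it is compact in $[b,\top]$. Therefore every element of $[b,\top]$ is a join of interval-compact elements. Completeness of the interval is immediate, since arbitrary meets and joins of subsets of $[b,\top]$ are computed as in $L$, with the empty join equal to $b$.
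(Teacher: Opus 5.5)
Your proof is correct and takes essentially the same route as the paper. You join $b$ onto the family, apply interval-compactness, and then use compactness of $b$ in $L$ to remove the extra $b$; the algebraicity argument via the elements $k\vee b$ is identical. The only difference is that you work with directed joins and remove $b$ by passing to a common upper bound $d''$, whereas the paper works with finite subfamilies and removes $b$ by taking the union $X_0\cup X_1$.
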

\begin{proof}
	Suppose first that $x$ is compact in $L$. For a nonempty
	$Y\subseteq[b,\top]$, joins in the interval agree with those in
	$L$, so compactness of $x$ provides a finite subfamily of $Y$
	whenever $x\le\bigvee_{[b,\top]}Y$. For $Y=\varnothing$,
	the interval join is $b$; if $x\le b$, then $x=b$ and the
	empty subfamily is a witness. Thus $x$ is interval-compact.
	
	Conversely, let $x$ be compact in $[b,\top]$, and suppose
	$x\le\bigvee_L X$ for $X\subseteq L$. Because $b\le x$
	and $b$ is compact in $L$, there is a finite $X_0\subseteq X$
	with $b\le\bigvee_L X_0$. Consider
	$Y=\{b\vee t:t\in X\}\subseteq[b,\top]$.
	Its join in the interval is $b\vee\bigvee_L X$, which lies
	above $x$. Interval-compactness yields finite $X_1\subseteq X$
	such that
	\[
	x\le b\vee\bigvee_L X_1.
	\]
	It follows that
	$x\le\bigvee_L(X_0\cup X_1)$, as required.
	The reasoning also applies when $X$ is empty, using the ordinary
	empty-join convention.
	
	Finally, if $L$ is algebraic and $x\ge b$, express $x$ as the
	join of compact elements $c\le x$ in $L$. Then
	\[
	x=\bigvee\{b\vee c:c\le x,\ c\text{ compact in }L\}.
	\]
	A finite join of compact elements is compact in $L$, and hence
	in $[b,\top]$ by the first part. Thus the interval is algebraic.
\end{proof}

We call an algebraic lattice \emph{coherent} if its compact elements
form a bounded sublattice: equivalently, its top is compact and
finite meets of compact elements remain compact.

\begin{corollary}[Non-coherence of natural intervals]
	\label{cor:noncoherent}
	If $|N|\ge2$, then $\lattice$ and each of its principal intervals
	$[S,\CL]$, $[I,\CL]$, and $[D,\CL]$ is a non-coherent algebraic
	lattice: each contains two compact elements with noncompact meet.
\end{corollary}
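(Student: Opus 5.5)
The plan is to combine Proposition~\ref{prop:duality}, Theorem~\ref{thm:compact-meet}, and Lemma~\ref{lem:interval-compact}. For each of the four lattices we first confirm that it is algebraic. Then we exhibit two elements that are compact in that lattice and whose meet is not compact.

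\emph{The lattice $\lattice$.} By Proposition~\ref{prop:duality}, $\lattice$ is algebraic. Its compact elements are exactly the finitely axiomatizable extensions. We take $S=B_{\{\Ser\}}$ and $I=B_{\{\IA\}}$. Both are compact because $\Ser$ and $\IA$ each have finitely many coalition-indexed instances up to US, as noted after~\eqref{eq:basic}. The index sets $\{\Ser\}$ and $\{\IA\}$ are incomparable, so by Theorem~\ref{thm:compact-meet} the meet $S\cap I$ is not compact. Hence the compact elements are not closed under finite meets, and $\lattice$ is non-coherent. It is irrelevant here that the top element $\CL$ is also compact.

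\emph{The principal intervals $[S,\CL]$, $[I,\CL]$, and $[D,\CL]$.} Each base point $b\in\{S,I,D\}$ is compact in $\lattice$. By Lemma~\ref{lem:interval-compact}, each interval $[b,\CL]$ is algebraic, and an element of the interval is compact there iff it is compact in $\lattice$. Meets in the interval are intersections, since every member already contains $b$; by Proposition~\ref{prop:duality} these coincide with meets in $\lattice$. So it suffices to find two basic extensions above $b$ with incomparable index sets; their meet again lies above $b$.
\begin{itemize}
\item For $[S,\CL]$, use $B_{\{\Ser,\IA\}}$ and $B_{\{\Ser,\Det\}}$.
\item For $[I,\CL]$, use $B_{\{\IA,\Ser\}}$ and $B_{\{\IA,\Det\}}$.
\item For $[D,\CL]$, use $B_{\{\Det,\Ser\}}$ and $B_{\{\Det,\IA\}}$.
\end{itemize}
In each case the two index sets are incomparable. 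Theorem~\ref{thm:compact-meet} therefore makes their intersection noncompact in $\lattice$, and Lemma~\ref{lem:interval-compact} transfers noncompactness to the interval.

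The only point needing care is the transfer between $\lattice$ and the intervals. The comparison of compactness is exactly Lemma~\ref{lem:interval-compact}, whose hypothesis is that the base point is compact, and this holds. The comparison of meets follows because arbitrary intersections of members of $[b,\CL]$ stay in $[b,\CL]$. No genuine obstacle is expected.
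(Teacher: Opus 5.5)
Your proposal is correct and follows essentially the same route as the paper: algebraicity from Proposition~\ref{prop:duality}, the pair $S,I$ for $\lattice$, and for each interval the two basic extensions obtained by adding the two missing schemes to the bottom element. In both arguments, Theorem~\ref{thm:compact-meet} gives noncompactness of the meet in $\lattice$, and Lemma~\ref{lem:interval-compact}, applied with a compact bottom, transfers algebraicity and compactness to the interval.
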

\begin{proof}
	The lattice $\lattice$ is algebraic by
	Proposition~\ref{prop:duality}. Its compact elements
	$S$ and $I$ have noncompact meet by
	Theorem~\ref{thm:compact-meet}. For any of the three other
	intervals, choose the two basic extensions obtained by adding
	each of the two missing strategic schemes to its bottom element.
	They are compact and incomparable in $\lattice$, so their meet
	is noncompact there by Theorem~\ref{thm:compact-meet}.
	The bottom of each interval is compact in $\lattice$;
	Lemma~\ref{lem:interval-compact} therefore preserves both
	algebraicity and the relevant compactness properties. Hence the
	chosen pair has a noncompact meet also in the principal interval.
\end{proof}

\begin{remark}[The lattice-theoretic meaning of the depth obstruction]
	The obstruction is more than the independence of three axiom
	schemes. Each of the three failures---nonavailability,
	incompatible coalition choices, and nondeterministic outcomes---can
	be confined to a single local game while every other local game is
	classical. Two different defects can therefore be placed
	arbitrarily far apart without affecting formulas of bounded modal
	depth at the root. It is this separation that prevents finite, and
	even uniformly bounded-depth, axiomatizations of meets of
	incomparable basic extensions.
\end{remark}

\section{The lattice of coalition logics: scope and further problems}
\label{sec:scope}

The results of this paper concern the extension lattice $\lattice$
as a mathematical object, rather than only the eight basic coalition
logics determined by Ser, IA, and Det. The distinction is substantive:
the basic logics form a Boolean cube under inclusion, but their meets
need not be computed within that cube. The surrounding lattice has
infinitely many explicitly constructed coatoms, a uniquely determined
nonserial coatom, and noncompact meets of incomparable basic
extensions. The stronger modal-depth obstruction and the
non-coherence of natural principal intervals further show that the
compact elements of $\lattice$ need not be closed under meets.

Two complementary methods make these results possible. On the
algebraic side, the order duality with $[\Vcl,\Vmcl]$ converts
relative variety covers into coatoms; finite critical coalition
algebras yield three infinite families of such covers. On the
semantic side, local strategic defects can be separated by
arbitrarily many classical rounds, producing the exact compact-meet
criterion. The general tools---congruence distributivity, J\'onsson's
lemma, and techniques from modal extension lattices---have established
precedents \cite{Jonsson1967,Blok1980,Kracht1999}, as do the cyclic
Boolean structures used in the finite witnesses
\cite{CzedliNagy2013}. The coalition-specific contribution lies in
the game-theoretic realization of the critical operations and in the
resulting structure theorems for $\lattice$.

The scope of these conclusions is important. Variety duality
applies to every structural logic in the $\MCL$--$\CL$ interval,
but does not by itself provide game-frame completeness for each
such logic. Likewise, a finite critical algebra witnesses a
relative cover without automatically giving a finite axiomatization
of the corresponding coatom. These distinctions lead to several
concrete questions about the lattice of coalition logics.

\begin{problem}[Finite axiomatizability of critical coatoms]
	For which admissible prime--type pairs $(p,\kappa)$, with
	$\kappa\in\{\mathrm{IA},\mathrm{Det},\mathrm{Mix}\}$, is
	$C_p^\kappa$ compact in $\lattice$? Here admissibility means
	$p\ge2$ for $\kappa=\mathrm{IA}$, $p\ge3$ for
	$\kappa=\mathrm{Det}$, and $p\ge5$ for $\kappa=\mathrm{Mix}$.
	
	Baker's finite basis theorem \cite{Baker1977} applies to the
	finite algebra $\mathbf A_p^\kappa$: the signature is finite and
	$\Var(\mathbf A_p^\kappa)$ is congruence-distributive.
	Thus $\Var(\mathbf A_p^\kappa)$ has a finite equational basis.
	This does not settle finite axiomatizability of
	\[
	C_p^\kappa
	=\Th\bigl(\Vcl\vee\Var(\mathbf A_p^\kappa)\bigr)
	=\CL\cap\Th(\mathbf A_p^\kappa),
	\]
	which concerns the equational theory of a \emph{join} of varieties,
	or equivalently finite generation of the coatom over $\MCL$.
\end{problem}

\begin{problem}[The coatom spectrum and splitting]
	Determine the cardinality of $\Coat(\lattice)$ and of each of
	the three serial coatom classes in Theorem~\ref{thm:families}.
	The constructions give an infinite lower bound, while the countable
	language yields the upper bound $2^{\aleph_0}$. It remains unknown
	from the present results whether these classes contain further coatoms
	beyond the exhibited prime-indexed families.
	
	Which serial coatoms admit finite critical algebraic witnesses?
	Can the finite simple $\Vcl$-critical algebras realized by general
	concurrent games be classified? Within this restricted class,
	Lemma~\ref{lem:distinct} shows that equality of the associated
	relative covers is equivalent to isomorphism of their finite
	critical witnesses. Finally, which serial coatoms $C$, if any,
	admit a splitting partner $K\nsubseteq C$ such that every
	$L\in\lattice$ satisfies $L\subseteq C$ or $K\subseteq L$?
	How does the answer change when splitting is considered relative
	to $[S,\CL]$?
\end{problem}

\begin{problem}[Coherent principal intervals and compact meets]
	Characterize the compact elements $L\in\lattice$ for which the
	principal interval $[L,\CL]$ is coherent. Corollary~\ref{cor:noncoherent}
	excludes the intervals with bottoms $\MCL$, $S$, $I$, and $D$,
	but does not settle the general case. More broadly, for which
	finitely generated structural intermediate logics $L_1,L_2$
	is the meet $L_1\cap L_2$ again finitely generated?
\end{problem}

\paragraph{Reproducibility.}
The finite-game operations and the depth-separation constructions
admit direct finite checks. Appendix~\ref{app:verification}
describes a reproducible protocol and the supplementary program
\texttt{coalition\_checks.py}. These calculations serve as independent
checks of particular instances; the uniform statements for all
admissible primes and all modal depths rest on the proofs in the
main text.

\section*{Acknowledgements}

The authors used OpenAI ChatGPT (GPT-6), under their direction, for
substantive English-language editing and reorganization of the manuscript,
for suggesting and drafting clarifications to mathematical proofs, and for
drafting and refining the supplementary finite-model checking program.
The authors reviewed the AI-assisted material and retain full responsibility
for the mathematical results, the independent verification of proofs and
computational outputs, the accuracy of references, and the final manuscript.

\appendix
\section{Reproducible checks for finite witnesses}\label{app:verification}

The finite constructions in Sections~\ref{sec:finite-coatoms} and
\ref{sec:meets} admit direct computational checks. To make these
checks reproducible, we provide the supplementary Python script
\texttt{coalition\_checks.py}. It uses only the Python standard
library. From a directory containing the script, run
\begin{center}
	\small
	\verb|python3 coalition_checks.py --primes 2 3 5 7 --depths 1 2 3 4|
\end{center}
with assertions enabled (that is, without Python's \texttt{-O}
optimization option). The two arguments specify the finite cycle
sizes and the positive modal depths to be tested.

\emph{Cyclic game algebras.}
For each supplied prime $p$ and each admissible family
($p\ge2$ for IA, $p\ge3$ for Det, and $p\ge5$ for Mix), the
script reconstructs the full-profile outcome table using the two
active agents. It computes the operations $f_\varnothing$, $f_a$,
$f_b$, and $f_{\{a,b\}}$ for every subset of $W_p$ and compares
them with Table~\ref{tab:operations}. It then checks the defining
\MCL-algebra inequalities and evaluates all instances of the Ser,
IA, and Det equations over the relevant subsets of $W_p$. Finally,
it enumerates the partitions of $W_p$ invariant under the cyclic
shift, confirming the expected trivial/discrete partition pattern
for the tested prime sizes. These calculations concern the two
active agents; the extension to additional dummy agents is proved
in Section~\ref{sec:finite-coatoms}, not separately enumerated by
the script.

\emph{Depth-separation frames.}
The script enumerates the nine unordered incomparable pairs among
the eight basic extensions. For each pair and each requested positive
depth $n$, it constructs the separating frame $F_n$ and its repaired
version $F_n^+$ from Theorem~\ref{thm:compact-meet}. It checks the
local strategic conditions, the conditions on all nonroot generated
subframes, and the fact that repairing $z$ leaves the other local
games unchanged. It also compares the full-profile histories from
the root through depth $n$ and evaluates an explicit valuation
refuting $\gamma_n=\beta_Q\vee\Box^n\beta_P$ at the root of $F_n$.
The history comparison is the identity comparison for these concrete
frames; it does not enumerate arbitrary depth-bounded formulas.

For the command displayed above, the script tests nine admissible
family--prime instances and $36$ depth-separation instances (nine
incomparable pairs at four depths). A successful run ends with
\begin{center}
	\small
	\verb|PASS SUMMARY cyclic instances=9; incomparable pairs=9; depth instances=36|
\end{center}
Composite cycle sizes are skipped, and nonpositive depths are
rejected. Because the program exhaustively enumerates truth sets
and set partitions, it is intended for small finite instances.

These are finite regression checks, not machine-checked proofs of the
general theorems. In particular, the program does not establish
simplicity or the relative-cover property for every prime, verify
J\'onsson's lemma, or prove depth-locality and the compact-meet
criterion for arbitrary modal depth. Those conclusions follow from
the arguments in the main text.


\begin{thebibliography}{99}
	
	\bibitem{Pauly2002}
	M.~Pauly, A modal logic for coalitional power in games,
	\emph{J. Logic Comput.} \textbf{12} (2002), no.~1, 149--166.
	\href{https://doi.org/10.1093/logcom/12.1.149}{doi:10.1093/logcom/12.1.149}.
	
	\bibitem{LiJu2025}
	Y.~Li and F.~Ju, A minimal coalition logic,
	\emph{J. Logic Comput.} \textbf{35} (2025), no.~8, art.~exaf059.
	\href{https://doi.org/10.1093/logcom/exaf059}{doi:10.1093/logcom/exaf059}.
	
	\bibitem{LiJu2026}
	Y.~Li and F.~Ju, Completeness of coalition logics with seriality,
	independence of agents, or determinism,
	\emph{J. Log. Lang. Inf.} (2026), published online.
	\href{https://doi.org/10.1007/s10849-026-09457-z}{doi:10.1007/s10849-026-09457-z}.
	
	\bibitem{ChenJu2026}
	Z.~Chen and F.~Ju, Seven kinds of equivalent models for generalized
	coalition logics, \emph{Inform. Comput.} \textbf{309} (2026),
	art.~105419.
	\href{https://doi.org/10.1016/j.ic.2026.105419}{doi:10.1016/j.ic.2026.105419}.
	
	\bibitem{Jonsson1967}
	B.~J\'onsson, Algebras whose congruence lattices are distributive,
	\emph{Math. Scand.} \textbf{21} (1967), 110--121.
	\href{https://doi.org/10.7146/math.scand.a-10850}{doi:10.7146/math.scand.a-10850}.
	
	\bibitem{Blok1980}
	W.~J.~Blok, The lattice of modal logics: an algebraic investigation,
	\emph{J. Symbolic Logic} \textbf{45} (1980), no.~2, 221--236.
	\href{https://doi.org/10.2307/2273184}{doi:10.2307/2273184}.
	
	\bibitem{Kracht1999}
	M.~Kracht, \emph{Tools and Techniques in Modal Logic},
	Studies in Logic and the Foundations of Mathematics, vol.~142,
	Elsevier, 1999.
	
	\bibitem{BurrisSankappanavar1981}
	S.~Burris and H.~P.~Sankappanavar, \emph{A Course in Universal Algebra},
	Graduate Texts in Mathematics, vol.~78,
	Springer-Verlag, New York, 1981.
	
	\bibitem{Baker1977}
	K.~A.~Baker, Finite equational bases for finite algebras
	in a congruence-distributive equational class,
	\emph{Adv. Math.} \textbf{24} (1977), no.~3, 207--243.
	\href{https://doi.org/10.1016/0001-8708(77)90056-1}{doi:10.1016/0001-8708(77)90056-1}.
	
	\bibitem{CzedliNagy2013}
	G.~Cz\'edli and I.~V.~Nagy, Varieties of distributive rotational lattices,
	\emph{Acta Univ. Palack. Olomuc. Fac. Rerum Natur. Math.}
	\textbf{52} (2013), no.~1, 71--78.
	\href{https://dml.cz/handle/10338.dmlcz/143392}{DML-CZ permanent record}.
	
\end{thebibliography}
\end{document}